\newtheorem{theorem}{Theorem}
\newtheorem{proposition}{Proposition}
\newtheorem{proof}{Proof}
\begin{document}

\title{Edge Information Hub-Empowered 6G NTN: Latency-Oriented Resource Orchestration and Configuration}

\author{Yueshan Lin, Wei Feng, \IEEEmembership{Senior Member, IEEE}, Yunfei Chen, \IEEEmembership{Senior Member, IEEE}, Ning Ge, \IEEEmembership{Member, IEEE}, Zhiyong Feng, \IEEEmembership{Senior Member, IEEE}, and Yue Gao, \IEEEmembership{Fellow, IEEE}
	
	\thanks{Yueshan Lin, Wei Feng and Ning Ge are with the Department of Electronic Engineering, Tsinghua University, Beijing 100084, China (email: lin-ys17@tsinghua.org.cn, fengwei@tsinghua.edu.cn, gening@tsinghua.edu.cn).}
	\thanks{Yunfei Chen is with the Department of Engineering, University of Durham, Durham DH1 3LE, U.K. (e-mail: Yunfei.Chen@durham.ac.uk).}
	\thanks{Zhiyong Feng is with the School of Information and Communication Engineering, Beijing University of Posts and Telecommunications, Beijing 100084, China (email: ).}
	\thanks{Yue Gao is with the School of Computer Science, Fudan University, Shanghai 200433, China (email: yue.gao@ieee.org).}
}

\maketitle

\begin{abstract}
Quick response to disasters is crucial for saving lives and reducing loss. This requires low-latency uploading of situation information to the remote command center. Since terrestrial infrastructures are often damaged in disaster areas, non-terrestrial networks (NTNs) are preferable to provide network coverage, and mobile edge computing (MEC) could be integrated to improve the latency performance. Nevertheless, the communications and computing in MEC-enabled NTNs are strongly coupled, which complicates the system design. In this paper, an edge information hub (EIH) that incorporates communication, computing and storage capabilities is proposed to synergize communication and computing and enable systematic design. We first address the joint data scheduling and resource orchestration problem to minimize the latency for uploading sensing data. The problem is solved using an optimal resource orchestration algorithm. On that basis, we propose the principles for resource configuration of the EIH considering payload constraints on size, weight and energy supply. Simulation results demonstrate the superiority of our proposed scheme in reducing the overall upload latency, thus enabling quick emergency rescue.
\end{abstract}

\begin{IEEEkeywords}
Edge information hub, latency, mobile edge computing, non-terrestrial network, resource orchestration.
\end{IEEEkeywords}

\maketitle

\section{Introduction}
Every year, disasters (earthquakes, floods, fires, explosions, \textit{etc.}) cause significant losses to both human lives and economies. To reduce such losses, immediate emergency responses and rescue operations are necessary. This requires is real-time situation awareness of disaster areas \cite{Intro 10}. For this purpose, a large amount of data collected by field sensors will be uploaded to the remote command center with low latency. The data are then analyzed for decision making and rescue operations \cite{Intro 11}. Nevertheless, terrestrial communication infrastructures are often severely damaged in such disasters. Moreover, terrestrial networks might not cover areas where disasters take place, \textit{e.g.}, forests and oceans \cite{Intro 03}. Utilizing non-terrestrial networks (NTNs) such as satellites and unmanned aerial vehicles (UAVs) becomes crucial for disaster relief applications \cite{Intro 04} \cite{Intro 06}.

In practice, collecting sensing data through NTNs also face inherent limitations. Both satellites and UAVs are usually limited by communication resources \cite{Intro 05}. As the amount of data increases and the required latency decreases, a significant increase in the network throughput would be required, which may exceed the NTN's capability. Considering that the remote command center only requires key information on disaster areas to make rescue decisions, mobile edge computing (MEC) could be leveraged to process the sensing data and extract key information at the network edge \cite{Intro 07}. In this way, only the extracted key information needs to be uploaded through the satellite backhaul, the burden of which could be significantly reduced to reduce overall latency.

In MEC-enabled NTNs, the communications and computing are strongly coupled. In this case, orchestrating the multi-dimensional network resources separately might result in low resource efficiency and unsatisfactory latency performance. Therefore, an edge information hub (EIH) that incorporates communications, computing and storage capabilities is proposed to synergize heterogeneous parts and enable systematic design. In practice, the EIH could be deployed on UAVs to empower the NTN. There are two major problems in the EIH-empowered NTN. First, the data scheduling and network resource orchestration need to be jointly optimized for synergy, where the network resources include communications, computing and storage resources. Besides, since UAVs usaully have inherent limitations in size, weight and energy supply, it is necessary to derive the principles for resource configuration of the EIH. We thus focus on the optimal resource orchestration and the configuration for an EIH-Empowered NTN, which is envisioned a key part of the upcoming six-generation (6G) network. 

\IEEEpubidadjcol

\subsection{Related Works}
\hspace{3.2mm} \textit{1) Terrestrial Networks:} A number of terrestrial wireless solutions that support sensing data uploading as one of the Internet of Things (IoT) applications have been proposed. Some utilize the existing cellular networks for IoT applications, such as standards Extended Coverage GSM for IoT (EC-GSM-IoT) \cite{RW-terr 01} and Category-M LTE specifications \cite{RW-terr 02}. Others are based on low-power wide area network (LPWAN). The major technologies include Narrow-Band IoT (NB-IoT) \cite{RW-terr 03} and Long Range Radio (LoRa) \cite{RW-terr 04}.

\textit{2) Satellite Networks:} Since terrestrial networks are inherently limited by coverage, many existing studies consider utilizing satellite-enabled NTNs to provide services in remote and disaster areas. Sanctis \textit{et al.} \cite{RW-sat 01} introduced emergency management as well as two other application scenarios where satellite plays an important role. Specifically, satellites-enabled incident area networks could support both voice and data transmissions and wireless sensor and actuator communications in disaster areas. In \cite{RW-sat 02}, Centenaro \textit{et al.} presented a survey on deployment solutions for exploiting satellites to provide IoT services where terrestrial networks are unavailable. In the survey they discussed the pros and cons of satellite access as far as IoT traffic is concerned. Fang \textit{et al.} \cite{RW-sat 03} proposed three basic models of satellite-enabled networks to support ubiquitous IoT applications. For each basic model, a survey of the state-of-the-art technologies was provided, and future research directions were discussed.

\textit{3) Integrated Satellite-UAV Networks:} A number of existing studies further considered the integration of satellite and UAV networks for sensing data uploading. Bacco \textit{et al.} \cite{RW-satUAV 01} presented the design framework of a space information network (SIN) consisting of both satellites and UAVs to support IoT data exchanges. Examples of application scenarios as well as a possible relay solution were presented for the SIN. In \cite{RW-satUAV 02}, Zhu \textit{et al.} formulated a two-level queuing network to model the two-tier networks with UAV access and GEO satellite backhaul. In this study, closed-form expressions for the network backlog and delay bounds were derived, and the access scale was optimized to obtain optimal network performance. Ma \textit{et al.} \cite{RW-satUAV 03} jointly optimized the trajectory of UAVs, the bandwidth allocation among users, the transmit powers of UAVs and the selections of LEO satellites, to increase the average achievable rate and uploaded data amount while decreasing the consumed energy. In \cite{RW-satUAV 04}, Liu \textit{et al.} jointly optimized the subchannel allocation, the transmit power usage and the hovering times for data transmission efficiency maximization with a total latency constraint. The whole flight process of UAVs was considered for optimization and thus only the slowly-varying large-scale channel state information was used. Wang \textit{et al.} \cite{RW-satUAV 05} considered a integrated satellite-UAV framework where drones act as relays to upload the data from smart devices to low earth orbit satellites. They jointly optimized the smart devices connection scheduling, power control, and UAV trajectory to maximize the system capacity.

\textit{4) MEC-enabled Integrated Satellite-UAV Networks:} To further reduce the latency, MEC servers could be utilized to compress the data at the network edge by removing redundant information. MEC-empowered NTNs have been discussed in many studies. For instance, Lin \textit{et al.} \cite{RW-MEC 01} proposed three minimal integrating structures of MEC and NTNs, and established an on-demand network orchestration framework. In \cite{RW-MEC 02}, Kim \textit{et al.} investigated the data upload scheduling and path planning scheme for space-air-ground integrated edge computing systems, aimed at minimizing the total system energy cost. Ei \textit{et al.} \cite{RW-MEC 03} determined the optimal data upload scheduling and bandwidth allocation to minimize the total latency. In \cite{RW-MEC 04}, Chen \textit{et al.} jointly optimized the upload scheduling proportion and computing resource allocation to minimize the system energy consumption under time delay constraint. Chao \textit{et al.} \cite{RW-MEC 05} considered maximizing the profit of the MEC service provider by jointly designing the upload scheduling decisions and the UAV positions. In \cite{RW-MEC 06}, Waqar \textit{et al.} jointly optimized the upload scheduling decision, the bandwidth allocation, the computation resource allocation and the power usage of users, to minimize the weighted sum of total time delay and energy consumption. Ding \textit{et al.} \cite{RW-MEC 07} minimized the weighted sum energy consumption via joint user association, transmit precoding, task data assignment, and resource allocation. In \cite{RW-MEC 08}, Hu \textit{et al.} investigated the problem of joint optimization of the UAV 3D trajectory with resource allocation to maximizing the energy efficiency while satisfying users’ quality-of-experience. Chai \textit{et al.} \cite{RW-MEC 09} modelled the task data with dependencies as directed acyclic graphs and proposed a joint data upload scheduling and resource allocation scheme to improve the network efficiency. In \cite{RW-MEC 10}, Liu \textit{et al.} introduced a process-oriented framework that designs the whole process of data upload. Specifically, a latency minimization problem was formulated to optimize the data upload scheduling and power usage of users.

Although the studies mentioned above have made great progress in supporting sensing data uploading, there still exist research gaps. One is that the generalized optimal solution to the joint data scheduling and multi-dimensional resource orchestration problem has yet to be investigated. The existing studies discussed this problem by making assumptions to the system to simplify the problem and provide solutions for special cases. For instance, \cite{RW-MEC 03} proposed a joint data scheduling and subchannel allocation scheme assuming that the communication resources of the UAV-satellite link are equally allocated among users. In \cite{RW-MEC 10}, it was assumed that the communication and computation procedures are conducted sequentially instead of concurrently to simplify the problem. In this case, the communication and computing resources may not be fully utilized in the data uploading procedure. Another research gap is that existing studies mainly focus on the orchestration of communication and computation resources, but have not considered how much resources should be configured in the network \cite{RW-MEC 04} \cite{RW-MEC 06} \cite{RW-MEC 08}. It should be noted that solving the resource configuration problem relies on optimally orchestrating the multi-dimensional resources. Therefore, it is necessary to jointly consider the resource configuration problem and the resource orchestration problem.

\subsection{Main Contributions}
In this paper, we consider the systematic design of an EIH-empowered NTN to support low-latency sensing data upload in disaster relief scenarios. We investigate the joint data scheduling, communication resource allocation and computing capability orchestration problem. An optimal joint data scheduling and resource orchestration scheme is proposed to minimize the overall latency. Based on the optimal resource orchestration scheme, we derive several principles for the resource configuration of the EIH. Simulation results are presented to verify the conclusions and demonstrate the superiority of the proposed scheme. The main contributions of this paper are summarized as follows.

\begin{itemize}
	\item We model an EIH-empowered NTN to enable low-latency data uploading in disaster areas. In this model, the EIH is envisioned to synergize heterogeneous coupling parts, and enable systematic design. We formulate an overall upload latency minimization problem, where the data scheduling, the user-UAV transmission bandwidth allocation, the computing capability orchestration and the UAV-satellite data rate allocation are jointly considered.
	\item For the optimization problem, complicated piecewise functions exist in both the objective function and the constraints, rendering the problem hard to solve. We transform the problem by narrowing down its feasible region, and removing the piecewise functions. Accordingly, we equivalently recast the original problem into a convex form, and derive an optimal joint data scheduling and resource orchestration scheme.
	\item We derive the principles for resource configuration of the EIH, under its payload limitations in terms of size, weight and energy supply. Simulation results corroborate our theoretical achievements, and also demonstrate the superiority of our proposed scheme in reducing the overall upload latency.
\end{itemize}

The rest of this paper is organized as follows. In Section II, we introduce the system model of the EIH-empowered NTN. In Section III, we formulate the joint data scheduling and resource orchestration problem for latency minimization. We solve the problem and propose an optimal scheme in Section III. We further investigate the resource configuration problem and derive the principles for configuring the total computing capability in this section. Simulation results are presented in Section IV, while the conclusions of this paper are drawn in Section V.

\section{System Model}
As shown in Fig. 1, we consider an EIH-empowered NTN, which consists of $U$ users, a UAV equipped with an EIH, and a satellite to provide backhaul transmission. Each user $u$ has sensing data of size $D_u$ to upload. Fig. 2 presents the components of the EIH as well as the data flow diagram within it during the data upload process. The EIH incorporates a communication unit to the user, a communication unit to the satellite, a computing unit (\textit{i.e.}, an MEC server) and a storage unit. During the data upload process, the EIH receives user $u$'s uploaded data through its communication module and store the data in its storage module. For user $u$, the data could be divided into two parts, namely to-be-computed data (ratio $\eta_u$) and to-be-uploaded data (ratio $1-\eta_u$). The computing unit processes user $u$'s to-be-computed data with corresponding computing capability $F_u$ (cycles/s), and returns the outcome to the storage unit as to-be-uploaded data. The computing intensity of user $u$'s data is denoted as $\rho_u$ (cycles/bit). The computation could reduce the data size, where the computing output-to-input ratio is $\zeta_u$. The to-be-uploaded data are transmitted to the satellite from the EIH, with the transmission rate $R_u^S$. It should be noted that all transmissions and computations mentioned above could proceed concurrently.

\begin{figure}[t]
	\centering
	{\includegraphics[height=3in]{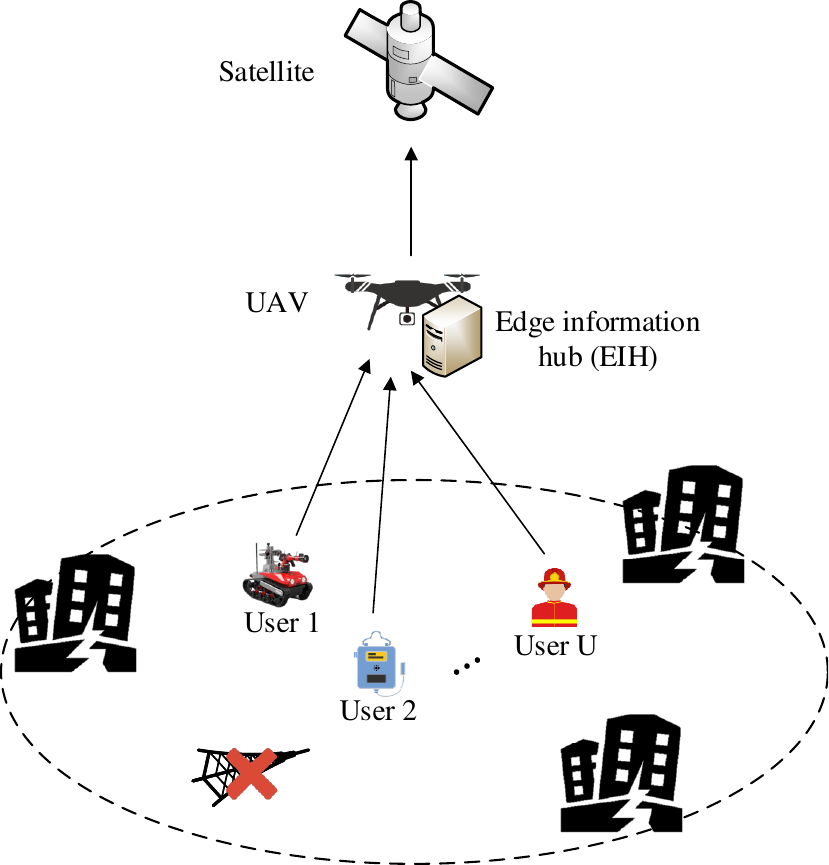}%
		\label{pic syst mod}}
	\caption{Illustration of the EIH-based non-terrestrial network.}
\end{figure}

\begin{figure}[b]
	\centering
	{\includegraphics[height=2.5in]{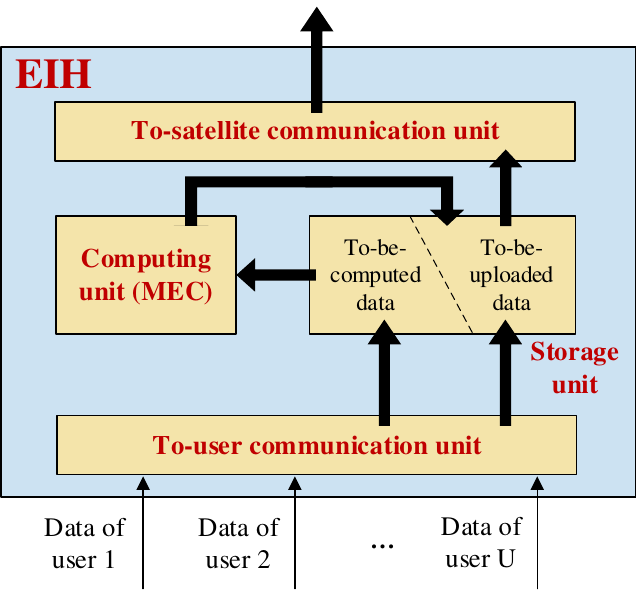}%
		\label{pic data flow}}
	\caption{Data flow diagram within the EIH during the sensing-data upload process.}
\end{figure}

The communication model of user-UAV transmission is presented as follows. We assume that both the users and the UAV have a single antenna, and therefore the received signal at the UAV from user $u$ is given by
\begin{equation}
	y_u = h_u x_u + n_u,
\end{equation}
where $x_u$ denotes the transmit signal, $n_u \sim \mathcal{CN}(0,\sigma^2)$ denotes the additive white Gaussian noise, and $h_u$ denotes the channel between the user and the UAV, which can be modeled as
\begin{equation}
	h_u = s \cdot l_u,
\end{equation}
where $s$ represents the fast-varying small-scale channel, which satisfies a complex Gaussian distribution with zero mean and unit variance (\textit{i.e.}, Rayleigh fading), and $l_u$ is the large-scale channel which can be expressed as follows \cite{syst 01}.
\begin{equation}
	l_u = 10^{ -\frac{1}{20} \left( \frac{A_0}{1+a\exp(-b(\theta_u-a))}+B_0 \right) },
\end{equation}
where $A_0 = \eta_{\rm LoS}-\eta_{\rm NLoS}$ and $B_0 = \eta_{\rm NLoS}+20\log_{10}(4\pi f d_u/c)$. In this model, $\eta_{\rm LoS}$, $\eta_{\rm NLoS}$, $a$ and $b$ are constants related to the propagation environment, $d_u$ and $\theta_u$ are the distance and elevation angle between user $u$ and the UAV respectively, $f$ denotes the carrier frequency and $c$ is the light speed. We assume that the users adopt a frequency division multiple access (FDMA) scheme to avoid interference. We use $r_u$ to denote the ergodic spectrum efficiency, and the ergodic transmission rate can therfore be formulated as
\begin{equation}
	R_u = B_u {\rm E}\left[ \log_2\left(1+\frac{p_u|h_u|^2}{\sigma^2}\right) \right] = B_u r_u,
\end{equation}
where $B_u$ is the allocated bandwidth for user $u$, $p_u = {\rm E}(|x_u|^2)$ denotes the transmission power, and $\sigma^2$ denotes the noise power. We introduce a parameter $r_u$ to denote the ergodic spectrum efficiency.

\begin{figure*}[t]
	\begin{subnumcases}{\label{expr_T} T_u(B_u,R_u^S,F_u,\eta_u) =}
		\vspace{1.5mm}
		\frac{D_u}{R_u^S}(\zeta_u\eta_u+1-\eta_u), & $\eta_u B_u r_u \geq \frac{F_u}{\rho_u}$, $\frac{\zeta_u F_u}{\rho_u}+(1-\eta_u) B_u r_u \geq R_u^S$, $\frac{F_u}{\rho_u} \geq \frac{\eta_u R_u^S}{\zeta_u\eta_u+1-\eta_u}$ \\
		\vspace{1.5mm}
		\frac{\eta_u D_u \rho_u}{F_u}, & $\eta_u B_u r_u \geq \frac{F_u}{\rho_u}$, $\frac{\zeta_u F_u}{\rho_u}+(1-\eta_u) B_u r_u \geq R_u^S$, $\frac{F_u}{\rho_u} < \frac{\eta_u R_u^S}{\zeta_u\eta_u+1-\eta_u}$ \\
		\vspace{1.5mm}
		\frac{\eta_u D_u \rho_u}{F_u}, & $\eta_u B_u r_u \geq \frac{F_u}{\rho_u}$, $\frac{\zeta_u F_u}{\rho_u}+(1-\eta_u) B_u r_u < R_u^S$ \\
		\vspace{1.5mm}
		\frac{D_u}{R_u^S}(\zeta_u\eta_u+1-\eta_u), & $\eta_u B_u r_u < \frac{F_u}{\rho_u}$, $(\zeta_u\eta_u+1-\eta_u) B_u r_u \geq R_u^S$ \\
		\vspace{1.5mm}
		\frac{D_u}{B_u r_u}, & $\eta_u B_u r_u < \frac{F_u}{\rho_u}$, $(\zeta_u\eta_u+1-\eta_u) B_u r_u < R_u^S$
	\end{subnumcases}
\end{figure*}

\begin{figure*}[t]
	\begin{subnumcases}{\label{expr_V} V_u(B_u,R_u^S,F_u,\eta_u) =}
		\vspace{1.5mm}
		\nonumber \frac{D_u}{B_u r_u}\left[B_u r_u - R_u^S - (1-\zeta_u) \frac{F_u}{\rho_u}\right], & $\eta_u B_u r_u \geq \frac{F_u}{\rho_u}$, $\frac{\zeta_u F_u}{\rho_u}+(1-\eta_u) B_u r_u \geq R_u^S$, \\
		\vspace{1.5mm}
		& \hspace{47mm} $\frac{F_u}{\rho_u} \geq \frac{\eta_u R_u^S}{\zeta_u\eta_u+1-\eta_u}$ \\
		\vspace{1.5mm}
		\nonumber \frac{D_u}{B_u r_u}\left[B_u r_u - R_u^S - (1-\zeta_u) \frac{F_u}{\rho_u}\right], & $\eta_u B_u r_u \geq \frac{F_u}{\rho_u}$, $\frac{\zeta_u F_u}{\rho_u}+(1-\eta_u) B_u r_u \geq R_u^S$, \\
		\vspace{1.5mm}
		& \hspace{47mm} $\frac{F_u}{\rho_u} < \frac{\eta_u R_u^S}{\zeta_u\eta_u+1-\eta_u}$ \\
		\vspace{1.5mm}
		\frac{D_u}{B_u r_u}\left(\eta_u B_u r_u - \frac{F_u}{\rho_u}\right), & $\eta_u B_u r_u \geq \frac{F_u}{\rho_u}$, $\frac{\zeta_u F_u}{\rho_u}+(1-\eta_u) B_u r_u < R_u^S$ \\
		\vspace{1.5mm}
		\frac{D_u}{B_u r_u}[(\zeta_u\eta_u+1-\eta_u) B_u r_u - R_u^S], & $\eta_u B_u r_u < \frac{F_u}{\rho_u}$, $(\zeta_u\eta_u+1-\eta_u) B_u r_u \geq R_u^S$ \\
		\vspace{1.5mm}
		0, & $\eta_u B_u r_u < \frac{F_u}{\rho_u}$, $(\zeta_u\eta_u+1-\eta_u) B_u r_u < R_u^S$
	\end{subnumcases}
	\hrulefill
\end{figure*}

We use $T_u(B_u,R_u^S,F_u,\eta_u)$ to denote the total communication and computing latency for user $u$ to upload its sensing data, and $V_u(B_u,R_u^S,F_u,\eta_u)$ to denote the minimum storage required by user $u$. Their specific expressions are shown in \textit{Proposition 1}.

\begin{proposition}
	The expression of $T_u(B_u,R_u^S,F_u,\eta_u)$ is given by (\ref{expr_T}), and the expression of $V_u(B_u,R_u^S,F_u,\eta_u)$ is given by (\ref{expr_V}) in the next page.
	\label{proposition 1}
\end{proposition}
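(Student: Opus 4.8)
The plan is to treat the EIH as three concurrently working fluid servers in tandem---user-to-UAV reception at rate $B_u r_u$, on-board computing at rate $F_u/\rho_u$ (in input bits per second), and UAV-to-satellite upload at rate $R_u^S$---and to track three cumulative-flow curves: the received volume $A(t)=\min\{B_u r_u t,\,D_u\}$, the processed to-be-computed input $P(t)$, and the uploaded volume $S(t)$. Reception finishes at $t_1=D_u/(B_u r_u)$, and, whenever computing is the binding stage, processing finishes at $t_2=\eta_u D_u \rho_u/F_u$; the kinks of all the piecewise-linear curves occur only at $t_1$ and (when computing is binding) $t_2$, so every quantity of interest is determined by its values there. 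The total to-be-uploaded volume is $S_\infty=(\zeta_u\eta_u+1-\eta_u)D_u$.

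First I would fix the computing regime by comparing the to-be-computed arrival rate $\eta_u B_u r_u$ with the processing rate $F_u/\rho_u$: if $\eta_u B_u r_u\ge F_u/\rho_u$ a computation backlog forms, the computing unit is busy on $[0,t_2]$ and $P(t)=\min\{(F_u/\rho_u)t,\,\eta_u D_u\}$; otherwise it clears each bit on arrival and $P(t)=\eta_u A(t)$. This is the first branching inequality in both (\ref{expr_T}) and (\ref{expr_V}). I would then compute the rate at which to-be-uploaded data is fed to the satellite buffer, namely $\zeta_u F_u/\rho_u+(1-\eta_u)B_u r_u$ in the computing-bound case and $(\zeta_u\eta_u+1-\eta_u)B_u r_u$ otherwise, and compare it with $R_u^S$; this yields the second branching inequality and fixes whether the satellite runs without idling on $[0,t_1]$, so that $S(t)=R_u^S t$ there, or keeps pace with the input, so that $S(t)$ equals the cumulative to-be-uploaded curve.

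For the latency I would use the completion-time identity for a work-conserving constant-rate server, $T_u=\max_{s}\big[s+(S_\infty-\tilde A(s))/R_u^S\big]$, the maximum taken over the kinks $s\in\{0,t_1,t_2\}$ of the piecewise-linear to-be-uploaded arrival curve $\tilde A$, with $S_\infty-\tilde A(s)\ge 0$ the residual volume (which vanishes once generation is complete). The $s=0$ term returns the satellite-limited value $S_\infty/R_u^S=\frac{D_u}{R_u^S}(\zeta_u\eta_u+1-\eta_u)$, the $s=t_2$ term the computing-limited value $t_2=\eta_u D_u\rho_u/F_u$, and the $s=t_1$ term the reception-limited value $D_u/(B_u r_u)$; checking which dominates in each regime reproduces rows (a)--(e). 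The one genuinely delicate split is between (a) and (b), which share the computing-bound, satellite-congested regime and differ only in whether the satellite is still backlogged at $t_2$. I would show that backlog at $t_2$ is equivalent to $t_2\le S_\infty/R_u^S$, which rearranges exactly to the third branching inequality relating $F_u/\rho_u$ and $\eta_u R_u^S/(\zeta_u\eta_u+1-\eta_u)$, selecting row (a); in the complementary row (b) one notes $\zeta_u F_u/\rho_u<R_u^S$, so after $t_1$ the residual computing output is shipped out without further queueing and $T_u=t_2$.

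For the storage I would write the instantaneous occupancy as $V_u(t)=A(t)-(1-\zeta_u)P(t)-S(t)$---received data, minus the volume erased by compression, minus the volume already uploaded---which is again piecewise linear with its only interior kink at $t_1$. Since $A$ is flat and $P,S$ are non-decreasing on $[t_1,t_2]$, the occupancy is non-increasing there, so the peak is at $t_1$; evaluating $V_u(t_1)$ in each regime yields the bracketed expressions in (\ref{expr_V}), and collapses to $0$ when neither buffer ever fills. The main obstacle I anticipate is the bookkeeping in rows (a)--(b), where the computing and satellite buffers are simultaneously active: there I must confirm the peak is still pinned at $t_1$, i.e.\ that the slope of $V_u$ on $[0,t_1]$ is non-negative. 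I would establish this by chaining the first two branching inequalities, namely $B_u r_u-(1-\zeta_u)F_u/\rho_u-R_u^S\ge \eta_u B_u r_u-F_u/\rho_u\ge 0$, the first step using $R_u^S\le \zeta_u F_u/\rho_u+(1-\eta_u)B_u r_u$ and the second being that regime's defining inequality.
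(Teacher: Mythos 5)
Your proposal is correct and, at bottom, it is the same fluid-flow analysis as the paper's Appendix A: you use exactly the paper's five-regime case split, driven by the same two branching inequalities (whether the computing unit and the satellite link are backlogged during reception) and the same third inequality separating rows (a) and (b), which you correctly identify as $t_2 \le S_\infty/R_u^S$, i.e., the paper's ``which buffer drains first'' criterion. The difference is the machinery used inside each regime. For latency, the paper cuts the timeline into sequential phases (reception on $[0,t_1]$, drainage of the to-be-computed buffer, drainage of the residual to-be-uploaded buffer) and sums their durations before simplifying to the bottleneck form, whereas you invoke the work-conserving constant-rate-server completion identity $T_u=\max_s\bigl[s+(S_\infty-\tilde A(s))/R_u^S\bigr]$ over the kinks $\{0,t_1,t_2\}$, so that all five rows come out of one formula and the (a)/(b) dichotomy is just a comparison of the $s=0$ and $s=t_2$ terms. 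For storage, the paper tracks the two buffers separately and observes that each grows individually on $[0,t_1]$ under the regime's inequalities, whereas you use the single occupancy identity $V_u(t)=A(t)-(1-\zeta_u)P(t)-S(t)$ and pin the peak at $t_1$ via the chained inequality $B_u r_u-(1-\zeta_u)F_u/\rho_u-R_u^S\ge\eta_u B_u r_u-F_u/\rho_u\ge 0$, which is precisely the sum of the paper's two growth conditions. Each route has merit: the paper's phase decomposition is more elementary and keeps the physical bottleneck interpretation explicit, while yours is more uniform and trims per-case arithmetic; the only care your version requires is stating that the maximization in the completion identity runs over $s$ up to the generation-completion time (otherwise the $s$ term grows without bound), which your parenthetical about the residual volume vanishing only gestures at.
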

\begin{proof}
	See Appendix A.
\end{proof}

\section{Problem Formulation and Proposed Scheme}
In this section, we formulate the joint data scheduling and resource orchestration problem for latency minimization and propose our solution to the problem. Specifically, we first obtain the optimal data scheduling variable. Then we transform the optimization problem by narrowing its feasible region, so that the complicated piecewise functions are simplified for a viable solution. Based on this solution, we further investigate the resource configuration problem and derive several principles for configuring the total computing capability.

\subsection{Joint Data Scheduling and Resource Orchestration}
We minimize the overall data upload latency by optimally determining each user's allocated bandwidth in the user-UAV transmission $B_u$, allocated data rate in the UAV-satellite transmission $R_u^S$, allocated computing resource $F_u$, as well as data scheduling $\eta_u$. The optimization problem can be formulated as
\begin{subequations} \label{problem 1}
	\begin{align}
		\min_{\mathbf{B}, \mathbf{R}^S, \atop \mathbf{F}, \bm{\eta}} & \max_{u} \  T_u(B_u,R_u^S,F_u,\eta_u) \label{problem 1a} \\
		\mathrm{s.t.} \ \ & \sum_{u=1}^{U} V_u(B_u,R_u^S,F_u,\eta_u) \leq V_{\mathrm{total}}, \label{problem 1b} \\
		& \sum_{u=1}^{U} B_u \leq B_{\mathrm{total}}, \label{problem 1c} \\
		& \sum_{u=1}^{U} R_u^S \leq R^S_{\mathrm{total}}, \label{problem 1d} \\
		& \sum_{u=1}^{U} F_u \leq F_{\mathrm{total}}, \label{problem 1e} \\
		& B_u \geq 0, \  \forall u = 1,...,U, \label{problem 1f} \\
		& R_u^S \geq 0, \  \forall u = 1,...,U, \label{problem 1g} \\
		& F_u \geq 0, \  \forall u = 1,...,U, \label{problem 1h} \\
		& 0 \leq \eta_u \leq 1, \  \forall u = 1,...,U, \label{problem 1i}
	\end{align}
\end{subequations}
where $V_{\mathrm{total}}$, $B_{\mathrm{total}}$, $R^S_{\mathrm{total}}$, and $F_{\mathrm{total}}$ denote the total storage on UAV, the total bandwidth of user-UAV transmission, the total data rate of UAV-satellite transmission and the total computing capability of the computing unit, respectively. As shown in (\ref{expr_T}) and (\ref{expr_V}), both $T_u(B_u,R_u^S,F_u,\eta_u)$ and $V_u(B_u,R_u^S,F_u,\eta_u)$ are complicated piecewise functions, and thus problem (\ref{problem 1}) is difficult to solve.

\begin{table*}[t]
	\caption{Optimal data scheduling variable and corresponding optimized function values.}
	\centering
	\renewcommand{\arraystretch}{2.5}
	\begin{tabular}{|m{2.3in}|m{1.05in}|m{1.05in}|m{1.9in}|}
		\hline
		\textbf{Condition} & $\eta_u^{\mathrm{opt}}$ & $T_u^{\eta-\mathrm{opt}}$ & $V_u^{\eta-\mathrm{opt}}$ \\ \hline
		$B_u r_u < R_u^S$ & $0$ & $\frac{D_u}{B_u r_u}$ & $0$ \\ \hline
		$R_u^S \leq B_u r_u < \frac{R_u^S}{\zeta_u}$, $\frac{F_u}{\rho_u} < \frac{B_u r_u - R_u^S}{1-\zeta_u}$  & $\frac{F_u}{F_u(1-\zeta_u)+\rho_u R_u^S}$ & $\frac{D_u \rho_u}{F_u(1-\zeta_u)+\rho_u R_u^S}$ & $\frac{D_u}{B_u r_u}\left[B_u r_u - R_u^S - (1-\zeta_u) \frac{F_u}{\rho_u}\right]$ \\ \hline
		$R_u^S \leq B_u r_u < \frac{R_u^S}{\zeta_u}$, $\frac{F_u}{\rho_u} \geq \frac{B_u r_u - R_u^S}{1-\zeta_u}$ & $\frac{B_u r_u - R_u^S}{(1-\zeta_u) B_u r_u}$ & $\frac{D_u}{B_u r_u}$ & $0$ \\ \hline
		$B_u r_u \geq \frac{R_u^S}{\zeta_u}$, $\frac{F_u}{\rho_u} < \frac{R_u^S}{\zeta_u}$ & $\frac{F_u}{F_u(1-\zeta_u)+\rho_u R_u^S}$ & $\frac{D_u \rho_u}{F_u(1-\zeta_u)+\rho_u R_u^S}$ & $\frac{D_u}{B_u r_u}\left[B_u r_u - R_u^S - (1-\zeta_u) \frac{F_u}{\rho_u}\right]$ \\ \hline
		$B_u r_u \geq \frac{R_u^S}{\zeta_u}$, $\frac{R_u^S}{\zeta_u} \leq \frac{F_u}{\rho_u} < B_u r_u$ & $1$ & $\frac{\zeta_u D_u}{R_u^S}$ & $\frac{D_u}{B_u r_u}\left[B_u r_u - R_u^S - (1-\zeta_u) \frac{F_u}{\rho_u}\right]$ \\ \hline
		$B_u r_u \geq \frac{R_u^S}{\zeta_u}$, $\frac{F_u}{\rho_u} \geq B_u r_u$ & $1$ & $\frac{\zeta_u D_u}{R_u^S}$ & $\frac{D_u}{B_u r_u}(\zeta_u B_u r_u - R_u^S)$ \\ \hline
	\end{tabular}
	\label{eta-opt}
\end{table*}

\subsection{Optimal Data Scheduling}
By observing the expressions of $T_u(B_u,R_u^S,F_u,\eta_u)$ and $V_u(B_u,R_u^S,F_u,\eta_u)$ in (\ref{expr_T}) and (\ref{expr_V}), we present the following theorem
\begin{theorem}
	Assume that the values of variables $B_u$, $R_u^S$ and $F_u$ are given, an optimal value of variable $\eta_u^{\mathrm{opt}}$ can be determined to minimize both functions $T_u(B_u,R_u^S,F_u,\eta_u)$ and $V_u(B_u,R_u^S,F_u,\eta_u)$. The expressions of $\eta_u^{\mathrm{opt}}$ and the corresponding minimized function values $T_u^{\eta-\mathrm{opt}}(B_u,R_u^S,F_u) = T_u(B_u,R_u^S,F_u,\eta_u^{\mathrm{opt}})$ and $V_u^{\eta-\mathrm{opt}}(B_u,R_u^S,F_u) = V_u(B_u,R_u^S,F_u,\eta_u^{\mathrm{opt}})$ are given in Table \ref{eta-opt}.
	\label{theorem eta-opt}
\end{theorem}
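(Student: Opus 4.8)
The plan is to freeze $B_u$, $R_u^S$ and $F_u$ and treat $T_u$ and $V_u$ as one-variable functions of $\eta_u\in[0,1]$; write $x=B_ur_u$, $y=R_u^S$ and $z=F_u/\rho_u$ for brevity. The first and decisive step is to recognise that the five branches of (\ref{expr_T}) are just the pieces of a single pointwise maximum,
\[
	T_u(\eta_u)=\max\Big\{\tfrac{\eta_u D_u}{z},\ \tfrac{D_u}{y}\big(1-(1-\zeta_u)\eta_u\big),\ \tfrac{D_u}{x}\Big\}.
\]
Indeed, the three branch conditions are exactly the pairwise comparisons of these affine terms: $\eta_u B_ur_u\ge \tfrac{F_u}{\rho_u}$ is equivalent to $\tfrac{\eta_u D_u}{z}\ge\tfrac{D_u}{x}$, the condition $\tfrac{F_u}{\rho_u}\ge\tfrac{\eta_u R_u^S}{\zeta_u\eta_u+1-\eta_u}$ to $\tfrac{D_u}{y}(1-(1-\zeta_u)\eta_u)\ge\tfrac{\eta_u D_u}{z}$, and $(\zeta_u\eta_u+1-\eta_u)B_ur_u\ge R_u^S$ to $\tfrac{D_u}{y}(1-(1-\zeta_u)\eta_u)\ge\tfrac{D_u}{x}$. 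The fourth condition $\tfrac{\zeta_u F_u}{\rho_u}+(1-\eta_u)B_ur_u\ge R_u^S$ is redundant for the value: a one-line manipulation (it follows from $x\ge z/\eta_u$ substituted into $\zeta_u z+(1-\eta_u)x$) shows that the first two comparisons together imply it, which is why the two branches equal to $\tfrac{\eta_u D_u\rho_u}{F_u}$ carry the same value. The payoff is immediate: $T_u$ is a maximum of three affine functions of $\eta_u$, hence convex and piecewise linear.

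With convexity in hand, the minimization is elementary. The increasing term $\tfrac{\eta_u D_u}{z}$ and the decreasing term $\tfrac{D_u}{y}(1-(1-\zeta_u)\eta_u)$ cross at $\eta_u=\tfrac{F_u}{F_u(1-\zeta_u)+\rho_u R_u^S}$ with common value $\tfrac{D_u\rho_u}{F_u(1-\zeta_u)+\rho_u R_u^S}$, and I would locate the global minimizer on $[0,1]$ by comparing this crossing against the constant floor $\tfrac{D_u}{x}$ and clipping to the endpoints. Three outcomes arise and reproduce the rows of Table~\ref{eta-opt}: when the crossing lies above the floor and inside $[0,1]$ the minimizer is the crossing point; when the floor dominates, the minimum equals $\tfrac{D_u}{x}$ and is attained on the whole interval where both affine terms stay below it, whose left end is $\tfrac{B_ur_u-R_u^S}{(1-\zeta_u)B_ur_u}$ (degenerating to $\eta_u^{\mathrm{opt}}=0$ once $B_ur_u<R_u^S$); and when the crossing is pushed past $\eta_u=1$ the decreasing term dominates throughout and the optimum is the endpoint $\eta_u=1$ with value $\tfrac{\zeta_u D_u}{R_u^S}$. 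Translating the governing inequalities into relations among $B_ur_u$, $R_u^S/\zeta_u$ and $F_u/\rho_u$ gives precisely the six conditions of the table together with the corresponding $\eta_u^{\mathrm{opt}}$ and $T_u^{\eta-\mathrm{opt}}$.

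The remaining and most delicate step is to show this same $\eta_u^{\mathrm{opt}}$ also minimizes $V_u$. Viewed as a function of $\eta_u$, the branches of (\ref{expr_V}) have slopes $-(1-\zeta_u)D_u$, then $0$, then $+D_u$ as $\eta_u$ increases, and I would verify that consecutive branches agree in value at each breakpoint; hence $V_u$ is continuous, convex, non-negative, and minimized on a flat segment (or at $\eta_u=1$). The crux is that the very implication ``first two comparisons $\Rightarrow$ fourth condition'' established above also places the $T_u$-crossing point inside the flat branch $\tfrac{D_u}{B_ur_u}[B_ur_u-R_u^S-(1-\zeta_u)\tfrac{F_u}{\rho_u}]$, so the $T_u$-minimizer automatically lands in $V_u$'s valley; the floor-dominated and $\eta_u=1$ regimes are then checked by direct substitution, e.g.\ $\eta_u=\tfrac{B_ur_u-R_u^S}{(1-\zeta_u)B_ur_u}$ makes $(\zeta_u\eta_u+1-\eta_u)B_ur_u-R_u^S$ vanish and gives $V_u=0$. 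I expect this reconciliation of the two objectives to be the main obstacle, as it is the only point at which the storage and latency criteria must share a single optimizer; convexity of both functions is what reduces the verification to evaluating a few boundary conditions rather than searching. Since the minimizing set may be a whole interval, the table simply records one convenient common minimizer, and substituting it into the active branches fills in the remaining $V_u^{\eta-\mathrm{opt}}$ entries.
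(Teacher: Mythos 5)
Your proposal is correct, but it follows a genuinely different route from the paper's. The paper (Appendix B) proves Theorem \ref{theorem eta-opt} by exhaustive enumeration: it splits into the regimes $B_u r_u < R_u^S$, $R_u^S \leq B_u r_u < R_u^S/\zeta_u$ and $B_u r_u \geq R_u^S/\zeta_u$, sub-splits on the value of $F_u/\rho_u$, partitions $[0,1]$ into the intervals on which each branch of (\ref{expr_T}) and (\ref{expr_V}) is active, and tracks the monotonicity of both functions interval by interval to locate a common minimizer for every row of Table \ref{eta-opt}. You instead prove the structural identity $T_u=\max\bigl\{\tfrac{\eta_u D_u\rho_u}{F_u},\,\tfrac{D_u}{R_u^S}(\zeta_u\eta_u+1-\eta_u),\,\tfrac{D_u}{B_u r_u}\bigr\}$, whose validity rests precisely on your redundancy lemma (the comparisons $\eta_u B_u r_u\geq F_u/\rho_u$ and $\tfrac{D_u}{R_u^S}(\zeta_u\eta_u+1-\eta_u)\geq\tfrac{\eta_u D_u\rho_u}{F_u}$ together imply $\zeta_u F_u/\rho_u+(1-\eta_u)B_u r_u\geq R_u^S$); I checked that the lemma and the identity hold on all five branches, so minimizing $T_u$ reduces to elementary convex piecewise-linear minimization (crossing point versus floor versus endpoint clipping), the same lemma places the crossing point inside the slope-zero branch of the convex function $V_u$ so that the two objectives share a minimizer, and the remaining regimes follow from $V_u\geq 0$ and substitution at $\eta_u=1$. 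Both proofs ultimately rely on the same monotonicity facts (your slopes are the paper's ``increasing/decreasing/constant'' statements), but your organization is more conceptual: it explains why $T_u$ is convex in $\eta_u$ and why a single $\eta_u$ can minimize both objectives, compressing the paper's six-row, sub-case-by-sub-case bookkeeping into one convex program with three outcomes. What the paper's enumeration buys in exchange is full explicitness: it exhibits the entire interval of optimal $\eta_u$ in the degenerate cases (of which the table records one representative) without requiring the reader to spot the max-of-affine representation. If you write your version up, make one step rigorous: convexity of $V_u$ requires that its branches occur in slope-increasing order as $\eta_u$ grows --- which does hold, because the region of branch (\ref{expr_V}d) is downward-closed in $\eta_u$ and that of branch (\ref{expr_V}c) is upward-closed --- continuity at the breakpoints alone does not imply it.
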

\begin{proof}
	See Appendix B.
\end{proof}

Using Theorem \ref{theorem eta-opt}, we present a new optimization problem (\ref{problem 1}) as 
\begin{subequations} \label{problem 2}
	\begin{align}
		\min_{\mathbf{B}, \mathbf{R}^S, \mathbf{F}} & \max_{u} \  T_u^{\eta-\mathrm{opt}}(B_u,R_u^S,F_u) \label{problem 2a} \\
		\mathrm{s.t.} \ \ & \sum_{u=1}^{U} V_u^{\eta-\mathrm{opt}}(B_u,R_u^S,F_u) \leq V_{\mathrm{total}}, \label{problem 2b} \\
		& \sum_{u=1}^{U} B_u \leq B_{\mathrm{total}}, \label{problem 2c} \\
		& \sum_{u=1}^{U} R_u^S \leq R^S_{\mathrm{total}}, \label{problem 2d} \\
		& \sum_{u=1}^{U} F_u \leq F_{\mathrm{total}}, \label{problem 2e} \\
		& B_u \geq 0, \  \forall u = 1,...,U, \label{problem 2f} \\
		& R_u^S \geq 0, \  \forall u = 1,...,U, \label{problem 2g} \\
		& F_u \geq 0, \  \forall u = 1,...,U. \label{problem 2h}
	\end{align}
\end{subequations}

The relationship between problem (\ref{problem 1}) and problem (\ref{problem 2}) is illustrated in the following theorem.
\begin{theorem}
	The optimal objective functions in optimization problem (\ref{problem 1}) and optimization problem (\ref{problem 2}) are equal.
\end{theorem}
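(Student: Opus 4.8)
The plan is to show the two optimization problems have equal optimal objective values by establishing inequalities in both directions, which together force equality. Let me denote by $T^*_{(\ref{problem 1})}$ and $T^*_{(\ref{problem 2})}$ the optimal objective values of problem (\ref{problem 1}) and problem (\ref{problem 2}), respectively. The heart of the matter is that problem (\ref{problem 2}) is obtained from problem (\ref{problem 1}) by substituting the \emph{optimal} data scheduling variable $\eta_u^{\mathrm{opt}}$ guaranteed by Theorem \ref{theorem eta-opt}. Since this substitution fixes $\eta_u$ rather than optimizing over it freely, one must verify that restricting to $\eta_u=\eta_u^{\mathrm{opt}}(B_u,R_u^S,F_u)$ loses nothing.

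First I would prove $T^*_{(\ref{problem 2})} \leq T^*_{(\ref{problem 1})}$. Take any feasible point $(\mathbf{B},\mathbf{R}^S,\mathbf{F},\bm{\eta})$ of problem (\ref{problem 1}). By Theorem \ref{theorem eta-opt}, for each $u$, replacing $\eta_u$ by $\eta_u^{\mathrm{opt}}(B_u,R_u^S,F_u)$ simultaneously does not increase $T_u$ and does not increase $V_u$, i.e. $T_u^{\eta-\mathrm{opt}} \leq T_u$ and $V_u^{\eta-\mathrm{opt}} \leq V_u$. The latter inequality preserves feasibility of the storage constraint (\ref{problem 1b}), since $\sum_u V_u^{\eta-\mathrm{opt}} \leq \sum_u V_u \leq V_{\mathrm{total}}$; the remaining constraints (\ref{problem 1c})--(\ref{problem 1h}) do not involve $\eta_u$ and are untouched. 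Hence $(\mathbf{B},\mathbf{R}^S,\mathbf{F})$ is feasible for problem (\ref{problem 2}), and its objective satisfies $\max_u T_u^{\eta-\mathrm{opt}} \leq \max_u T_u$. Taking the infimum over all feasible points of problem (\ref{problem 1}) yields $T^*_{(\ref{problem 2})} \leq T^*_{(\ref{problem 1})}$.

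Next I would prove the reverse inequality $T^*_{(\ref{problem 1})} \leq T^*_{(\ref{problem 2})}$. Take any feasible point $(\mathbf{B},\mathbf{R}^S,\mathbf{F})$ of problem (\ref{problem 2}) and construct a candidate point for problem (\ref{problem 1}) by appending $\eta_u = \eta_u^{\mathrm{opt}}(B_u,R_u^S,F_u)$ for each $u$. By the definitions $T_u^{\eta-\mathrm{opt}} = T_u(B_u,R_u^S,F_u,\eta_u^{\mathrm{opt}})$ and $V_u^{\eta-\mathrm{opt}} = V_u(B_u,R_u^S,F_u,\eta_u^{\mathrm{opt}})$, the objective and the storage usage are exactly preserved, and one checks $0 \leq \eta_u^{\mathrm{opt}} \leq 1$ directly from the closed-form expressions in Table \ref{eta-opt} so that constraint (\ref{problem 1i}) holds. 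Thus the constructed point is feasible for problem (\ref{problem 1}) with identical objective value, giving $T^*_{(\ref{problem 1})} \leq \max_u T_u^{\eta-\mathrm{opt}}$ for every feasible point of problem (\ref{problem 2}), and taking the infimum gives $T^*_{(\ref{problem 1})} \leq T^*_{(\ref{problem 2})}$. Combining both directions establishes $T^*_{(\ref{problem 1})} = T^*_{(\ref{problem 2})}$.

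The main obstacle I expect is the first direction, specifically the simultaneity claim in Theorem \ref{theorem eta-opt} that a \emph{single} choice $\eta_u^{\mathrm{opt}}$ minimizes $T_u$ and $V_u$ at once. This is what makes the feasibility-preservation argument clean: if the minimizers differed, reducing latency might force a storage increase and the substitution could violate (\ref{problem 1b}). Since Theorem \ref{theorem eta-opt} has already been established (and is available for use here), the remaining work is bookkeeping, but I would be careful to state explicitly that the $\eta_u$-substitution is performed independently per user and that each $T_u$ and $V_u$ depends only on its own $(B_u,R_u^S,F_u,\eta_u)$, so per-user minimization is compatible with the coupling that appears only through the summed constraints and the outer $\max_u$.
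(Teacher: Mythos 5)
Your proof is correct and is essentially the paper's own argument: both hinge on Theorem \ref{theorem eta-opt}'s guarantee that a \emph{single} $\eta_u^{\mathrm{opt}}$ simultaneously minimizes $T_u$ and $V_u$, so that substituting $\eta_u^{\mathrm{opt}}$ per user preserves the storage constraint (\ref{problem 1b}) while not increasing the objective. The paper merely packages the same two inequalities as a proof by contradiction, treating your second direction ($T^*$ of (\ref{problem 1}) $\leq$ $T^*$ of (\ref{problem 2})) as immediate from problem (\ref{problem 2}) having a smaller feasible region, whereas you spell out both directions and additionally verify $0 \leq \eta_u^{\mathrm{opt}} \leq 1$, a detail the paper leaves implicit.
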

\begin{proof}
	We adopt the method of proof by contradiction. Assume that the theorem does not hold. Since problem (\ref{problem 2}) has the same objective function as problem (\ref{problem 1}) but has a smaller feasible region, this is equivalent to the following statement. For problem (\ref{problem 1}) there exists at least one set of optimal variables $(\mathbf{B}_{0}, \mathbf{R}^S_{0}, \mathbf{F}_{0}, \bm{\eta}_{0})$ lying outside the feasible region of problem (\ref{problem 2}), and satisfying that $\underset{u}{\max} \  T_u(B_{0,u},R_{0,u}^S,F_{0,u},\eta_{0,u}) < \underset{u}{\max} \  T_u^{\eta-\mathrm{opt}}(B_{u},R_{u}^S,F_{u})$ holds for all feasible solutions $(\mathbf{B}, \mathbf{R}^S, \mathbf{F})$ of problem (\ref{problem 2}).
	
	Consider $(\mathbf{B}_{0}, \mathbf{R}^S_{0}, \mathbf{F}_{0})$ as a solution for problem (\ref{problem 2}). Obviously constraints (\ref{problem 2c})-(\ref{problem 2h}) hold. For each $(B_{0,u},R_{0,u}^S,F_{0,u})$ we obtain the corresponding $\eta_u^{\mathrm{opt}}$ according to Table \ref{eta-opt}. Therefore, we have
	\begin{align}
		\nonumber V_u^{\eta-\mathrm{opt}}(B_{0,u},R_{0,u}^S,F_{0,u}) & = V_u(B_{0,u},R_{0,u}^S,F_{0,u},\eta_u^{\mathrm{opt}}) \\
		& \leq V_u(B_{0,u},R_{0,u}^S,F_{0,u},\eta_{0,u})
	\end{align}
	This means that constraint (\ref{problem 2b}) also holds, and $(\mathbf{B}_{0}, \mathbf{R}^S_{0}, \mathbf{F}_{0})$ is a feasible solution of problem (\ref{problem 2}). Besides, we have
	\begin{align}
		\nonumber T_u^{\eta-\mathrm{opt}}(B_{0,u},R_{0,u}^S,F_{0,u} & ) = T_u(B_{0,u},R_{0,u}^S,F_{0,u},\eta_u^{\mathrm{opt}}) \\
		& \leq T_u(B_{0,u},R_{0,u}^S,F_{0,u},\eta_{0,u})
	\end{align}
	and thus the inequality $\underset{u}{\max} \  T_u(B_{0,u},R_{0,u}^S,F_{0,u},\eta_{0,u}) \geq \underset{u}{\max} \  T_u^{\eta-\mathrm{opt}}(B_{0,u},R_{0,u}^S,F_{0,u})$ holds, which causes contradiction. Therefore the assumption does not hold and the theorem is proved.
\end{proof}

\subsection{Optimal Resource Orchestration}
Although we transform the problem equivalently, functions $T_u^{\eta-\mathrm{opt}}(B_u,R_u^S,F_u)$ and $V_u^{\eta-\mathrm{opt}}(B_u,R_u^S,F_u)$ are still complicated piecewise functions, which makes the problem hard to solve.

Based on problem (\ref{problem 2}), we present a new optimization problem as follows which further narrows down the feasible region so that both piecewise functions are simplified.
\begin{subequations} \label{problem 3}
	\begin{align}
		\min_{\mathbf{B}, \mathbf{R}^S, \mathbf{F}} & \max_{u} \  \frac{D_u}{B_u r_u} \label{problem 3a} \\
		\mathrm{s.t.} \ \ & \sum_{u=1}^{U} B_u \leq B_{\mathrm{total}}, \label{problem 3b} \\
		& \sum_{u=1}^{U} R_u^S \leq R^S_{\mathrm{total}}, \label{problem 3c} \\
		& \sum_{u=1}^{U} F_u \leq F_{\mathrm{total}}, \label{problem 3d} \\
		& R_u^S \leq B_u r_u \leq \frac{R_u^S}{\zeta_u}, \  \forall u = 1,...,U, \label{problem 3e} \\
		& \frac{F_u}{\rho_u} = \frac{B_u r_u - R_u^S}{1-\zeta_u}, \  \forall u = 1,...,U, \label{problem 3f} \\
		& B_u \geq 0, \  \forall u = 1,...,U, \label{problem 3g} \\
		& R_u^S \geq 0, \  \forall u = 1,...,U, \label{problem 3h} \\
		& F_u \geq 0, \  \forall u = 1,...,U. \label{problem 3i}
	\end{align}
\end{subequations}
By introducing new constraints (\ref{problem 3e}) and (\ref{problem 3f}), according to the third row of Table \ref{eta-opt}, both piecewise functions have a certain range, with expressions $T_u^{\eta-\mathrm{opt}}(B_u,R_u^S,F_u) = \frac{D_u}{B_u r_u}$ and $V_u^{\eta-\mathrm{opt}}(B_u,R_u^S,F_u) = 0$. Therefore the objective function (\ref{problem 2a}) changes into (\ref{problem 3a}), and constraint (\ref{problem 2b}) is removed in problem (\ref{problem 3}) since it always holds. The relationship between problem (\ref{problem 2}) and problem (\ref{problem 3}) is illustrated in the following theorem.
\begin{theorem}
	The optimal objective functions in optimization problem (\ref{problem 2}) and optimization problem (\ref{problem 3}) are equal.
\end{theorem}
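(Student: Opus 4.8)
The plan is to prove equality of the two optimal values, which I denote $\tau_2^\star$ and $\tau_3^\star$, by establishing the two opposite inequalities. One direction is immediate. Since constraints (\ref{problem 3e}) and (\ref{problem 3f}) place every user on the third row of Table~\ref{eta-opt}, where $T_u^{\eta-\mathrm{opt}}=D_u/(B_u r_u)$ and $V_u^{\eta-\mathrm{opt}}=0$, every feasible point of problem (\ref{problem 3}) is also feasible for problem (\ref{problem 2}): constraint (\ref{problem 2b}) holds trivially because $\sum_u V_u^{\eta-\mathrm{opt}}=0$, while (\ref{problem 2c})--(\ref{problem 2h}) coincide with (\ref{problem 3b})--(\ref{problem 3d}) and (\ref{problem 3g})--(\ref{problem 3i}), and the objectives agree. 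Minimizing the same objective over the smaller feasible set of (\ref{problem 3}) can only raise the optimum, so $\tau_2^\star \le \tau_3^\star$. The substance of the theorem is the reverse inequality $\tau_3^\star \le \tau_2^\star$, i.e.\ that forcing every user onto the balanced manifold (\ref{problem 3e})--(\ref{problem 3f}) costs nothing.

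The first key step is a lower-bound lemma: for every admissible triple one has $T_u^{\eta-\mathrm{opt}}(B_u,R_u^S,F_u)\ge D_u/(B_u r_u)$, which physically says the upload cannot finish before all $D_u$ bits have been received over the user--UAV link. I would verify it by inspecting the six rows of Table~\ref{eta-opt}: rows $1$ and $3$ give equality; for the two rows with $T_u^{\eta-\mathrm{opt}}=D_u\rho_u/\!\left(F_u(1-\zeta_u)+\rho_u R_u^S\right)$ the bound is algebraically equivalent to $F_u/\rho_u \le (B_u r_u - R_u^S)/(1-\zeta_u)$, which is exactly (or a direct consequence of) the defining condition of those rows; and for the two rows with $T_u^{\eta-\mathrm{opt}}=\zeta_u D_u/R_u^S$ it follows from their common hypothesis $B_u r_u \ge R_u^S/\zeta_u$.

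Next, for the reverse inequality I would take an optimal solution $(\mathbf{B},\mathbf{R}^S,\mathbf{F})$ of problem (\ref{problem 2}) with per-user value $t_u=T_u^{\eta-\mathrm{opt}}(B_u,R_u^S,F_u)$, so $\max_u t_u=\tau_2^\star$, and construct a feasible point of problem (\ref{problem 3}) with no larger objective by preserving each user's achieved latency. I set $\tilde{B}_u r_u = D_u/t_u$, which by the lemma satisfies $\tilde{B}_u r_u \le B_u r_u$, hence $\tilde{B}_u \le B_u$, and then pick $\tilde{R}_u^S$ together with $\tilde{F}_u/\rho_u=(\tilde{B}_u r_u-\tilde{R}_u^S)/(1-\zeta_u)$ from (\ref{problem 3f}). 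The window for $\tilde{R}_u^S$ that simultaneously respects (\ref{problem 3e}) and the termwise budgets $\tilde{R}_u^S\le R_u^S$, $\tilde{F}_u\le F_u$ is
\begin{equation*}
\max\!\left(\zeta_u \frac{D_u}{t_u},\, \frac{D_u}{t_u}-(1-\zeta_u)\frac{F_u}{\rho_u}\right) \le \tilde{R}_u^S \le \min\!\left(\frac{D_u}{t_u},\, R_u^S\right).
\end{equation*}
Any choice in this window keeps $\tilde{R}_u^S\le\tilde{B}_u r_u$, so $\tilde{F}_u\ge 0$, and constraints (\ref{problem 3e})--(\ref{problem 3f}) hold by construction; because $\tilde{B}_u\le B_u$, $\tilde{R}_u^S\le R_u^S$ and $\tilde{F}_u\le F_u$ hold termwise, the sum budgets (\ref{problem 3b})--(\ref{problem 3d}) are inherited from the optimal solution of (\ref{problem 2}). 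The new objective is $\max_u D_u/(\tilde{B}_u r_u)=\max_u t_u=\tau_2^\star$, so this feasible point of (\ref{problem 3}) gives $\tau_3^\star\le\tau_2^\star$, completing the argument.

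The main obstacle is showing the window above is nonempty for every user, which is where the case analysis concentrates. Nonemptiness reduces to the two inequalities $\zeta_u D_u/t_u \le R_u^S$ and $D_u/t_u-(1-\zeta_u)F_u/\rho_u \le R_u^S$, and I would verify each row by row against Table~\ref{eta-opt}: substituting that row's closed form for $t_u$ turns each inequality into an equality or into a direct consequence of the row's defining condition. For example, in the two rows with $t_u=\zeta_u D_u/R_u^S$ the first inequality becomes an equality while the second reduces to $F_u/\rho_u\ge R_u^S/\zeta_u$, which is precisely the hypothesis of those rows; in the two rows with $t_u=D_u\rho_u/\!\left(F_u(1-\zeta_u)+\rho_u R_u^S\right)$ the second inequality is an equality and the first reduces to $\zeta_u F_u/\rho_u\le R_u^S$, which follows from the row's bounds. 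Carrying out these six verifications finishes the construction, and together with the easy direction it establishes $\tau_2^\star=\tau_3^\star$.
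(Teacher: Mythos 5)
Your proof is correct and takes essentially the same route as the paper's Appendix C: both handle the nontrivial direction by taking an optimal solution of problem (\ref{problem 2}) and constructing, case by case against the rows of Table \ref{eta-opt}, a componentwise-smaller triple lying on the manifold defined by (\ref{problem 3e})--(\ref{problem 3f}) that preserves each user's latency $T_u^{\eta-\mathrm{opt}}$. The paper frames this as a proof by contradiction with explicit per-row formulas for the new triple --- formulas which coincide with the upper-endpoint choices of your window --- whereas you argue directly via the lower-bound lemma $T_u^{\eta-\mathrm{opt}} \geq D_u/(B_u r_u)$ and window nonemptiness; the difference is presentational rather than substantive.
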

\begin{proof}
	See Appendix C.
\end{proof}

Finally, we introduce a slack variable $T$ and equivalently recast the problem as
\begin{subequations} \label{problem 4}
	\begin{align}
		\min_{\mathbf{B}, \mathbf{R}^S, T} \  & T \label{problem 4a} \\
		\mathrm{s.t.} \ \ & T \geq \frac{D_u}{B_u r_u}, \  \forall u = 1,...,U, \label{problem 4b} \\
		& \sum_{u=1}^{U} B_u \leq B_{\mathrm{total}}, \label{problem 4c} \\
		& \sum_{u=1}^{U} R_u^S \leq R^S_{\mathrm{total}}, \label{problem 4d} \\
		& \sum_{u=1}^{U} \rho_u \frac{B_u r_u - R_u^S}{1-\zeta_u} \leq F_{\mathrm{total}}, \label{problem 4e} \\
		& R_u^S \leq B_u r_u \leq \frac{R_u^S}{\zeta_u}, \  \forall u = 1,...,U, \label{problem 4f} \\
		& T \geq 0, \label{problem 4g} \\
		& B_u \geq 0, \  \forall u = 1,...,U, \label{problem 4h} \\
		& R_u^S \geq 0, \  \forall u = 1,...,U, \label{problem 4i}
	\end{align}
\end{subequations}
It can be verified that problem (\ref{problem 4}) is a convex optimization problem and can be solved directly. We denote the optimal solution to this problem as $\overline{T}, \overline{\mathbf{B}} = \left[\overline{B_1},...,\overline{B_U}\right], \overline{\mathbf{R}^S} = \left[\overline{R_1^S},...,\overline{R_U^S}\right]$, where $\overline{T}$ is the minimal overall data upload latency, and $\overline{\mathbf{B}}$ and $\overline{\mathbf{R}^S}$ are the optimal bandwidth allocation of user-UAV transmission and the optimal data rate allocation of UAV-satellite transmission, respectively. Moreover, the optimal computing capability orchestration $\overline{\mathbf{F}} = \left[\overline{F_1},...,\overline{F_U}\right]$ can be given by
\begin{equation}
	\overline{F_u} = \rho_u \frac{\overline{B_u} r_u - \overline{R_u^S}}{1-\zeta_u}, \  \forall u = 1,...,U,
\end{equation}
and the optimal data scheduling variable could be obtained by substituting $\overline{\mathbf{B}}, \overline{\mathbf{R}^S}, \overline{\mathbf{F}}$ to calculate $\bm{\eta}^{\rm opt}$ according to Table 1. By now, we have proposed the optimal joint data scheduling and resource orchestration scheme.

\subsection{Resource Configuration}
We further investigate the resource configuration problem based on the proposed data scheduling and resource allocation scheme. Specifically, we aim to obtain the relationship between the overall latency and the total configured resources $B_{\mathrm{total}}$, $R_{\mathrm{total}}^S$ and $F_{\mathrm{total}}$. We present the following theorems
\begin{theorem}
	Assume two optimization problems, both adopting the form of problem (\ref{problem 4}). For both problems, the corresponding parameters take the same values except for the total computing capability, where in the first problem it is $F_{\rm total}^{(1)}$ while in the second it is $F_{\rm total}^{(2)}$. If the following inequality holds
	\begin{equation}
		F_{\rm total}^{(1)} \leq F_{\rm total}^{(2)},
	\end{equation}
	then the optimal values of the objective functions of both problems, denoted as $\overline{T}^{(1)}$ and $\overline{T}^{(2)}$ respectively, satisfy
	\begin{equation}
		\overline{T}^{(1)} \geq \overline{T}^{(2)}.
	\end{equation}
\end{theorem}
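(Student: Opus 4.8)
The plan is to prove this by a feasible-region containment argument, exploiting the fact that the two problems differ in exactly one place. First I would observe that among the constraints (\ref{problem 4b})--(\ref{problem 4i}) of problem (\ref{problem 4}), the total computing capability $F_{\rm total}$ enters only through the computing-capacity constraint (\ref{problem 4e}), namely $\sum_{u=1}^{U} \rho_u \frac{B_u r_u - R_u^S}{1-\zeta_u} \leq F_{\rm total}$. The objective (\ref{problem 4a}) and every other constraint are identical for the two problems, so the only difference between them is the right-hand side of (\ref{problem 4e}).

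Next I would denote by $\mathcal{F}^{(1)}$ and $\mathcal{F}^{(2)}$ the feasible regions of the first and second problems, respectively, and show $\mathcal{F}^{(1)} \subseteq \mathcal{F}^{(2)}$. Take any feasible point $(\mathbf{B}, \mathbf{R}^S, T) \in \mathcal{F}^{(1)}$. It satisfies (\ref{problem 4e}) with right-hand side $F_{\rm total}^{(1)}$; since $F_{\rm total}^{(1)} \leq F_{\rm total}^{(2)}$, the very same left-hand side also satisfies (\ref{problem 4e}) with right-hand side $F_{\rm total}^{(2)}$. As all remaining constraints are unchanged and already hold at this point, it lies in $\mathcal{F}^{(2)}$ as well, giving the desired inclusion. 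I would then invoke the elementary fact that minimizing a fixed objective over a larger feasible set cannot increase the optimal value: because both problems minimize the same variable $T$ and $\mathcal{F}^{(1)} \subseteq \mathcal{F}^{(2)}$, we obtain $\overline{T}^{(2)} = \min_{\mathcal{F}^{(2)}} T \leq \min_{\mathcal{F}^{(1)}} T = \overline{T}^{(1)}$, which is exactly the claimed inequality $\overline{T}^{(1)} \geq \overline{T}^{(2)}$.

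This argument carries no substantial technical obstacle; the only point that requires care is verifying that $F_{\rm total}$ genuinely appears in a single constraint and that increasing it \emph{relaxes} rather than tightens that constraint, so that the feasible regions are nested monotonically. Once the monotone nesting $\mathcal{F}^{(1)} \subseteq \mathcal{F}^{(2)}$ is established, the conclusion is immediate from the monotonicity of the infimum with respect to set inclusion, and no convexity or explicit solution of problem (\ref{problem 4}) is needed.
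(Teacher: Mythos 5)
Your proof is correct and follows essentially the same argument as the paper: the paper also notes that, since $F_{\rm total}^{(1)} \leq F_{\rm total}^{(2)}$ relaxes only constraint (\ref{problem 4e}), the optimal solution of the first problem remains feasible for the second, whence $\overline{T}^{(1)} \geq \overline{T}^{(2)}$. Your phrasing via explicit feasible-region nesting $\mathcal{F}^{(1)} \subseteq \mathcal{F}^{(2)}$ is just a slightly more general formulation of the same feasibility-plus-optimality argument.
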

\begin{proof}
	Denote the optimal solution to the first and second problem as $(\overline{\mathbf{B}}^{(1)}, \overline{\mathbf{R}^S}^{(1)}, \overline{T}^{(1)})$ and $(\overline{\mathbf{B}}^{(2)}, \overline{\mathbf{R}^S}^{(2)}, \overline{T}^{(2)})$, respectively. Since $F_{\rm total}^{(1)} \leq F_{\rm total}^{(2)}$, $(\overline{\mathbf{B}}^{(1)}, \overline{\mathbf{R}^S}^{(1)}, \overline{T}^{(1)})$ is also a feasible solution for the second problem, and because $(\overline{\mathbf{B}}^{(2)}, \overline{\mathbf{R}^S}^{(2)}, \overline{T}^{(2)})$ is the optimal solution, the inequality $\overline{T}^{(1)} \geq \overline{T}^{(2)}$ holds, which proves the theorem.
\end{proof}

\begin{theorem}
	Define
	\begin{align}
		\nonumber & F_{\rm total}^{\rm lim} = \left(\sum_{u=1}^{U} \rho_u D_u \right) \\
		& \hspace{24mm} \min \left\{ \frac{B_{\mathrm{total}}}{\sum_{u=1}^{U} \frac{D_u}{r_u}} , \frac{R^S_{\mathrm{total}}}{\sum_{u=1}^{U} \zeta_u D_u} \right\}, \label{F lim}
	\end{align}
	and
	\begin{equation}
		T^{\rm lim} = \max \left\{ \frac{\sum_{u=1}^{U} \frac{D_u}{r_u}}{B_{\mathrm{total}}} , \frac{\sum_{u=1}^{U} \zeta_u D_u}{R^S_{\mathrm{total}}} \right\}. \label{T lim}
	\end{equation}
	When the following inequality holds
	\begin{equation}
		F_{\mathrm{total}} \geq F_{\rm total}^{\rm lim}, \label{F cond}
	\end{equation}
	the closed-form expressions of the optimal solution to problem (\ref{problem 4}) are given as
	\begin{align}
		\overline{T} & = T^{\rm lim}, \\
		\nonumber \overline{B_{u}} & = \frac{D_u}{r_u} \min \left\{ \frac{B_{\mathrm{total}}}{\sum_{u=1}^{U} \frac{D_u}{r_u}} , \frac{R^S_{\mathrm{total}}}{\sum_{u=1}^{U} \zeta_u D_u} \right\}, \\
		& \hspace{50mm} \forall u = 1,...,U, \\
		\nonumber \overline{R^S_{u}} & = \zeta_u D_u \min \left\{ \frac{B_{\mathrm{total}}}{\sum_{u=1}^{U} \frac{D_u}{r_u}} , \frac{R^S_{\mathrm{total}}}{\sum_{u=1}^{U} \zeta_u D_u} \right\}, \\
		& \hspace{50mm} \forall u = 1,...,U.
	\end{align}
\end{theorem}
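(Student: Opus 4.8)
The plan is a lower-bound-plus-achievability argument: I will first show that $T^{\rm lim}$ is a valid lower bound on the optimal objective of problem~(\ref{problem 4}), and then verify that the proposed closed-form point is feasible and attains exactly $T^{\rm lim}$, so it must be optimal.

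First I would derive the lower bound using only constraints~(\ref{problem 4b}), (\ref{problem 4c}), (\ref{problem 4d}) and~(\ref{problem 4f}), all of which hold at any feasible point, so the bound applies to problem~(\ref{problem 4}) in full. Constraint~(\ref{problem 4b}) gives $B_u r_u \geq D_u/T$ for every $u$; summing over $u$ and using~(\ref{problem 4c}) yields $T \geq \left(\sum_u D_u/r_u\right)/B_{\mathrm{total}}$. For the second term, the upper half of~(\ref{problem 4f}) gives $R_u^S \geq \zeta_u B_u r_u \geq \zeta_u D_u/T$; summing and using~(\ref{problem 4d}) yields $T \geq \left(\sum_u \zeta_u D_u\right)/R^S_{\mathrm{total}}$. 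Taking the larger of the two bounds gives $\overline{T} \geq T^{\rm lim}$ by~(\ref{T lim}).

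Next I would verify achievability. Let $m = \min\left\{ \frac{B_{\mathrm{total}}}{\sum_u D_u/r_u},\, \frac{R^S_{\mathrm{total}}}{\sum_u \zeta_u D_u} \right\}$ denote the common multiplier appearing in the closed form, so that $\overline{B_u} = (D_u/r_u)\,m$ and $\overline{R_u^S} = \zeta_u D_u\, m$, and note $m = 1/T^{\rm lim}$. A short substitution gives $\overline{B_u} r_u = D_u m$, whence $D_u/(\overline{B_u} r_u) = 1/m = T^{\rm lim}$, so~(\ref{problem 4b}) holds with equality at $T = T^{\rm lim}$. Constraints~(\ref{problem 4c}) and~(\ref{problem 4d}) hold because $m$ is the minimum of the two defining ratios, and~(\ref{problem 4f}) holds with its upper bound tight since $\overline{R_u^S}/\zeta_u = D_u m = \overline{B_u} r_u$ and $\zeta_u \leq 1$. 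The only constraint that invokes the hypothesis is the computing budget~(\ref{problem 4e}): using $\overline{B_u} r_u - \overline{R_u^S} = (1-\zeta_u) D_u m$, its left-hand side collapses to $m \sum_u \rho_u D_u$, which equals $F_{\rm total}^{\rm lim}$ by~(\ref{F lim}) and is $\leq F_{\mathrm{total}}$ precisely by assumption~(\ref{F cond}). Hence the proposed point is feasible with objective value $T^{\rm lim}$, matching the lower bound, and is therefore optimal.

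I expect the only real subtlety to be the lower-bound step: the two terms of $T^{\rm lim}$ must be attributed correctly to the bandwidth and satellite-rate budgets, and the chained inequality $R_u^S \geq \zeta_u B_u r_u \geq \zeta_u D_u/T$ set up carefully. The feasibility check is then routine, its one nontrivial point being the recognition that~(\ref{problem 4e}) evaluates to exactly $F_{\rm total}^{\rm lim}$ at the candidate, which is what makes~(\ref{F cond}) the sharp threshold for the closed form to hold.
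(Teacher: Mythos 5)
Your proof is correct, and it takes a genuinely different route from the paper's. The paper proceeds constructively: it first drops the computing constraint~(\ref{problem 4e}) to form a relaxed problem, argues (via a rescaling construction) that the relaxation admits an optimal solution in which every user's latency $D_u/(B_u r_u)$ equals the common value $T$, then eliminates $\mathbf{B}$ and substitutes $R_u^S = \omega_u D_u/T$ to reduce the relaxation to a trivially solvable problem in $(\bm{\omega},T)$; finally it observes that under~(\ref{F cond}) the resulting closed-form point also satisfies~(\ref{problem 4e}), so optimality for the relaxation transfers to problem~(\ref{problem 4}). You instead prove a lower bound directly, aggregating~(\ref{problem 4b}) with the bandwidth budget~(\ref{problem 4c}) and chaining $R_u^S \geq \zeta_u B_u r_u \geq \zeta_u D_u/T$ through~(\ref{problem 4d}) to get $T \geq T^{\rm lim}$ at every feasible point, and then verify the candidate is feasible and attains $T^{\rm lim}$. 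The two arguments are logically equivalent in substance (your lower bound is exactly what the paper's relaxation computes, since your bound never invokes~(\ref{problem 4e})), but yours is shorter and tighter in two respects: it avoids the paper's intermediate problems and its somewhat informal equal-latency and ``easily obtain the optimal solution'' steps, and it makes explicit that the left-hand side of~(\ref{problem 4e}) at the candidate equals exactly $F_{\rm total}^{\rm lim}$, which exposes why~(\ref{F cond}) is the natural threshold. What the paper's derivational approach buys in exchange is an account of how the closed form is discovered rather than merely verified, which is pedagogically useful but not logically necessary.
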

\begin{proof}
	See Appendix D.
\end{proof}

Consider the following resource configuration problem: Given that the configured total user-UAV communication bandwidth $B_{\mathrm{total}}$ and total UAV-satellite data rate $R_{\mathrm{total}}^S$ are determined, to ensure that the minimized overall latency does not exceed a threshold $T^{\rm th}$, how much total computing capability $F_{\rm total}$ needs to be configured?

Based on the Theorem 4 and Theorem 5 presented above, we propose the following resource configuration
principles:
\begin{itemize}
	\item If $T^{\rm th} < T^{\rm lim}$, it is impossible to finish the data upload within the threshold latency, no matter how much total computing capability $F_{\rm total}$ is configured.
	\item If $T^{\rm th} \geq T^{\rm lim}$, configuring the total computing capability as $F_{\rm total} = F_{\rm total}^{\rm lim}$ is a sufficient condition to finish the data upload within the latency threshold, where $T^{\rm lim}$ is the corresponding overall latency. Further increasing $F_{\rm total}$ could not reduce the overall latency.
\end{itemize} 

In the second scenario, the configuration scheme $F_{\rm total} = F_{\rm total}^{\rm lim}$ is sufficient but probably not optimal, which could lead to resource redundancy. The optimal resource configuration problem is complicated and requires further research.

\begin{figure}[b]
	\centering
	\includegraphics[width=3.5in]{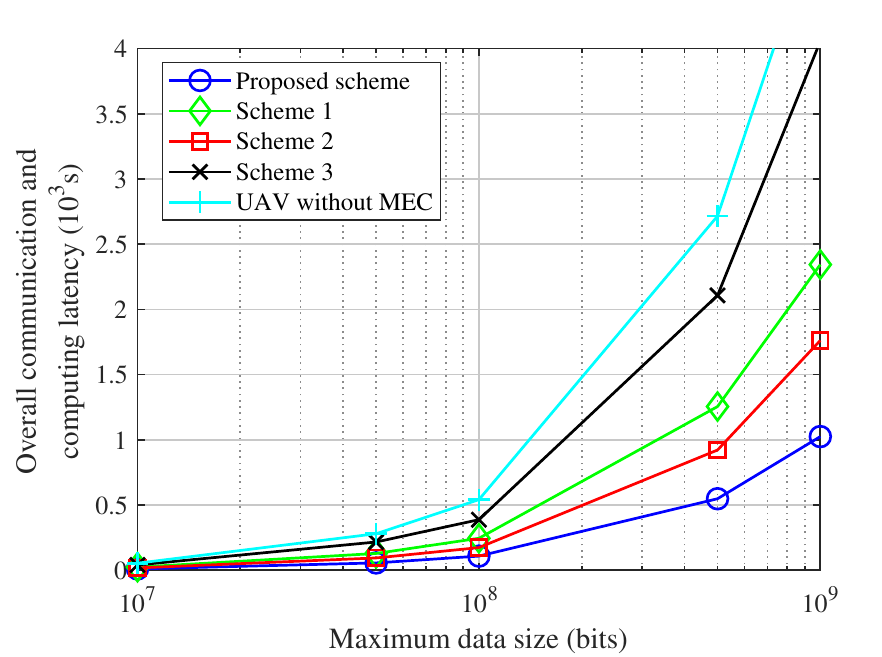}
	\caption{Overall latency comparison among different algorithms with different maximum data size.}
	\label{main comp small}
\end{figure}

\begin{figure*}[t]
	\centering
	\subfloat[]{
		\includegraphics[width=0.5\linewidth]{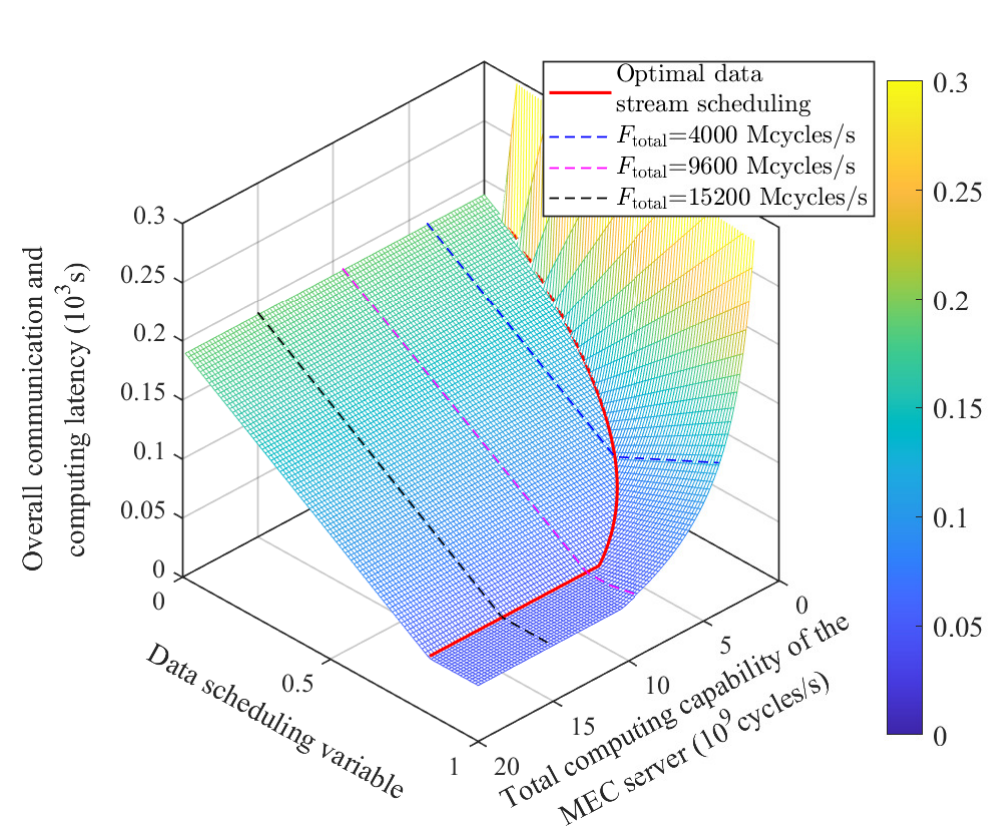}
	}\hfill
	\subfloat[]{
		\includegraphics[width=0.46\linewidth]{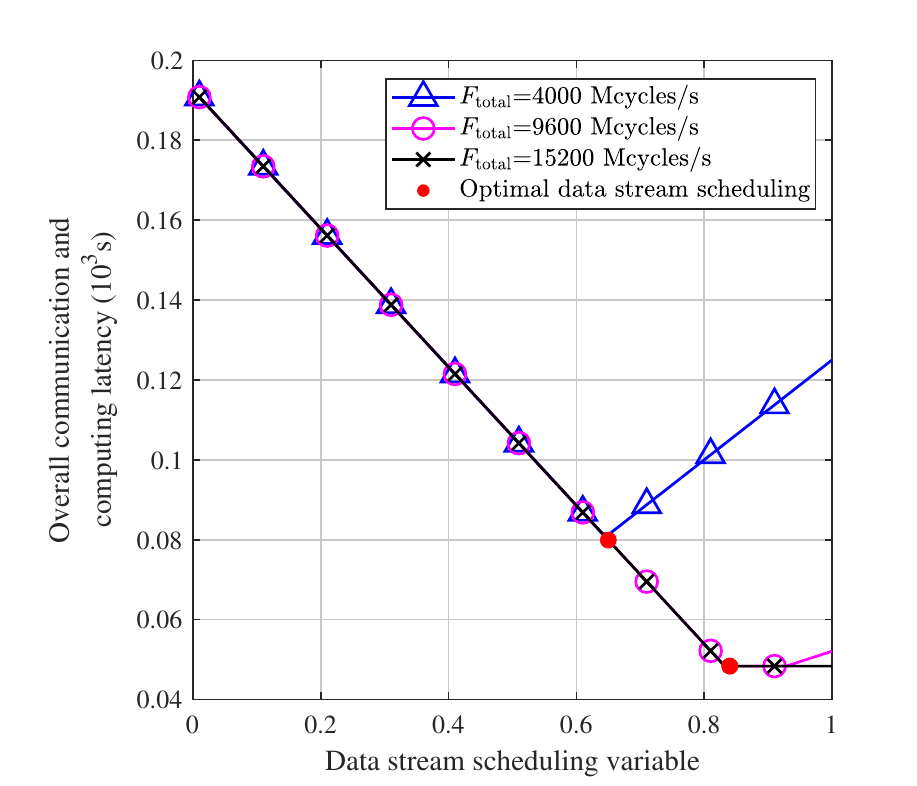}
	}\\
	\vspace{-3mm}
	\subfloat[]{
		\includegraphics[width=0.5\linewidth]{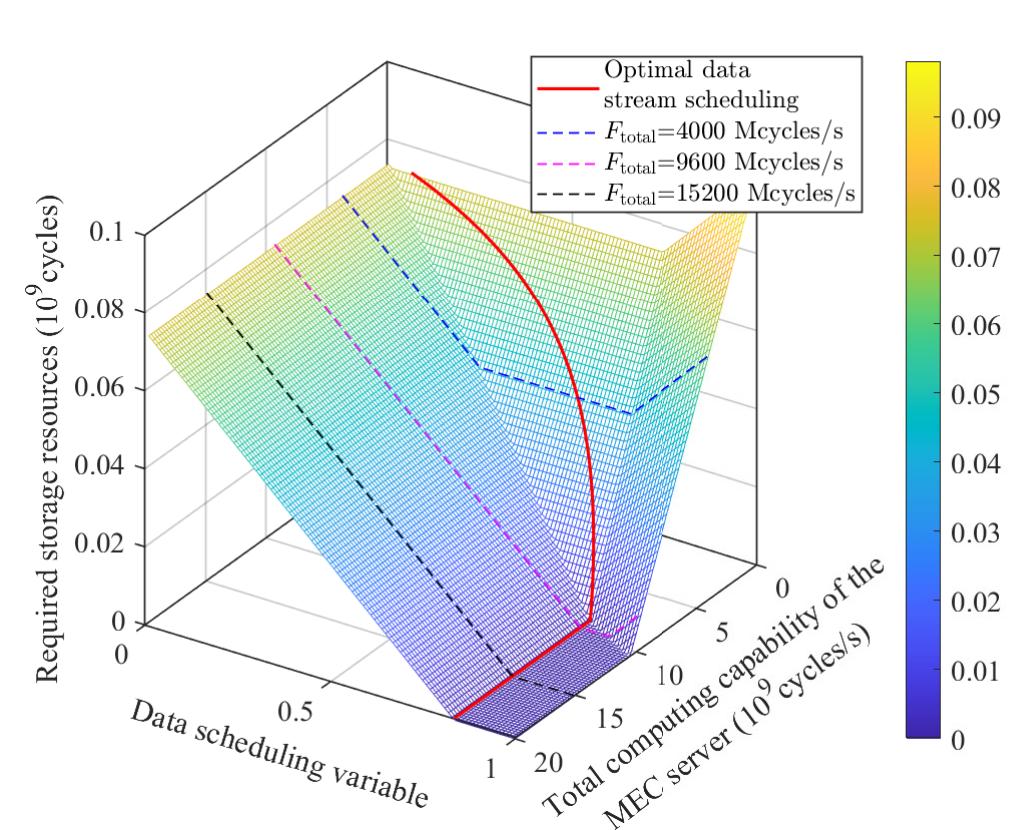}
	}\hfill
	\subfloat[]{
		\includegraphics[width=0.46\linewidth]{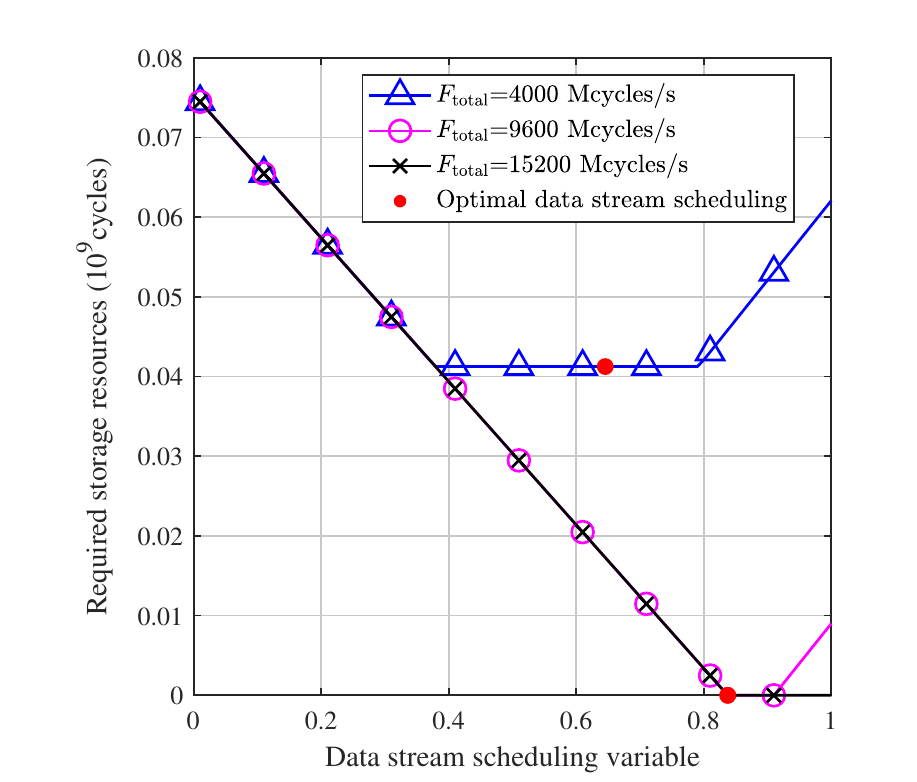}
	}
	\caption{Overall latency and the minimum storage required varying with the data scheduling variable.}
	\label{eta_opt_fig}
\end{figure*}

\section{Simulation Results}
We evaluate the performance of our proposed scheme through simulations in this section. The following simulation parameters are used unless otherwise specified. The number of users is set as $U = 4$. For each user's sensing data, the data size $D_u$ is uniformly distributed in range $[\frac{1}{10}D_{\rm max},D_{\rm max}]$, where the maximum data size is set as $D_{\rm max} = 100$ Mbits. Besides, the computing intensity $\rho_u$ is uniformly distributed in range $[1000,5000]$ cycles/bit \cite{sim 01}, and the computing output-to-input ratio is uniformly distributed in range $[0.01,0.1]$ \cite{sim 02}. We assume that the position of the UAV is $[0,0,1000]^T$ m, and the users are uniformly distributed in a cycle denoted as $\{[x,y,0]^T|\sqrt{x^2+y^2} \leq 1000(m)\}$. For the user-UAV transmission, the channel parameters are set as $(\eta_{\rm LoS},\eta_{\rm NLoS},a,b) = (0.1,21,5.0188,0.3511)$ \cite{sim 03}. The carrier frequency $f$ is set to be $f = 5.8$ GHz, and the speed of light is $c = 3\times10^8$ m/s. The transmit power for each user is set to be $p_u = 1$ W and the noise power is $\sigma^2 = -114$ dBm. The total bandwidth is set as $B_{\rm total} = 0.5$ MHz. The ergodic spectrum efficiency between user $u$ and the UAV $r_u$ is obtained by averaging over 1000 generated small-scale channels. The total UAV-satellite transmission rate is set as $R_{\rm total}^S = 0.5$ Mbits/s. Moreover, the total computing capability of the computing unit is set as $F_{\rm total} = 5 \times 10^9$ cycles/s.

We first compare the proposed scheme with other different algorithms. A simple scheme is considered first where the UAV is not equipped with an MEC server. In addition to that, three other schemes are considered as
\begin{itemize}
	\item Scheme 1: We consider the algorithm proposed in \cite{RW-MEC 10}.
	\item Scheme 2: We consider a simplified version of our proposed algorithm, where we equally allocate the computing and communication resources to all users.
	\item Scheme 3: A simplified version of the algorithm proposed in \cite{RW-MEC 10} is used, where the computing and communication resources are allocated equally to all users.
\end{itemize}
In this simulation, we select a series of maximum data size $D_{\rm max}$ from $0.1$ Mbits to $10$ Mbits. The latency results are averaged over 50 randomly-selected network topologies. As shown in Fig. \ref{main comp small}, the proposed scheme could obtain better system performance than the scheme proposed in \cite{RW-MEC 10}. The reason is that in \cite{RW-MEC 10} the overall latency is the sum of the user-UAV transmission latency, the computing latency and the UAV-satellite transmission latency, while in our proposed scheme, the transmission and computing processes could be carried out simultaneously to decrease the overall latency. Also, by comparing the latency obtained by the proposed scheme and Scheme 2, one concludes that the resource orchestration optimization provides a significant performance gain in terms of upload latency.

\begin{figure*}[t]
	\centering
	\subfloat[]{
		\includegraphics[width=0.5\linewidth]{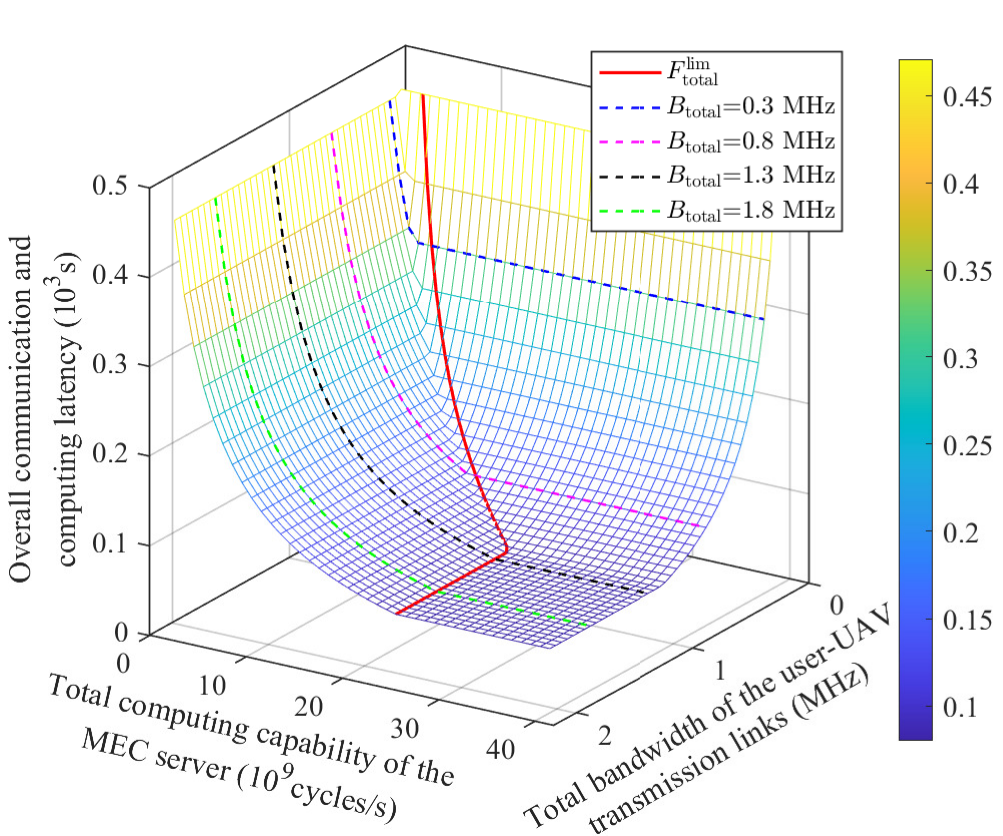}
	}\hfill
	\subfloat[]{
		\includegraphics[width=0.42\linewidth]{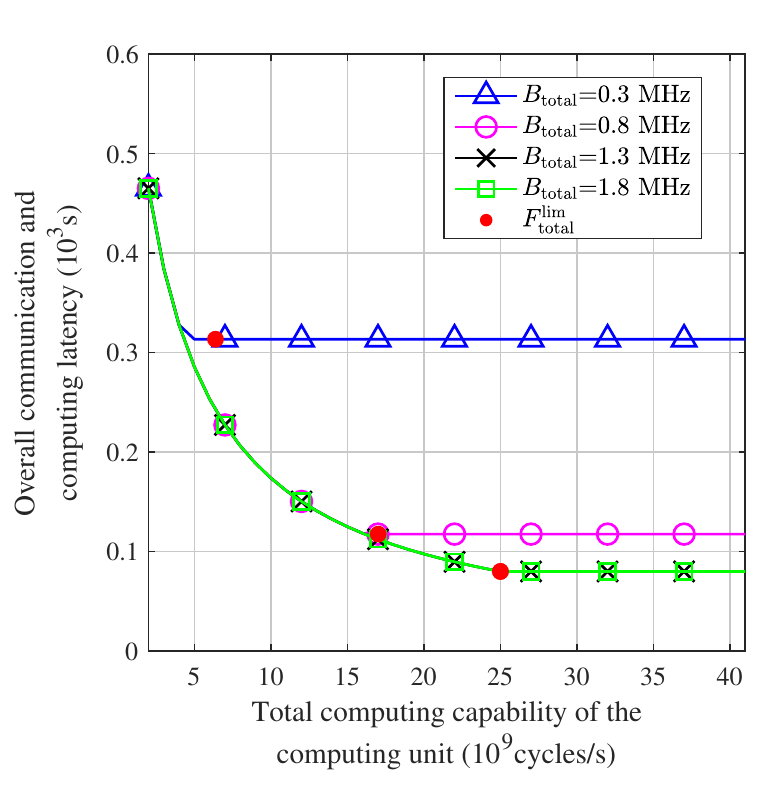}
	}
	\caption{Sufficient condition of the configured total computing capability and corresponding overall upload latency.}
	\label{main F total}
\end{figure*}

In Fig. \ref{eta_opt_fig}, we verify the performance of our proposed data scheduling scheme. Specifically, we consider the scenario of a single user, where both user-UAV bandwidth and UAV-satellite data rate are fixed ($B_{\rm total} = 0.5$ MHz, $R_{\rm total}^S = 0.5$ Mbits/s). With the total computing capability $F_{\rm total}$ varying between $[0.2,20]\times10^9$ cycles/s, the relationship between the overall latency and the data scheduling variable is simulated, as shown in Fig. \ref{eta_opt_fig}(a). The data scheduling variables obtained through our proposed method are also calculated and presented in Fig. \ref{eta_opt_fig}(a). It can be observed that the data scheduling variable obtained by the proposed scheme always achieves the minimum latency at any given $F_{\rm total}$. To show the results more clearly, we select three specific $F_{\rm total}$ values to sketch the overall latency with varying scheduling variables, as shown in Fig. \ref{eta_opt_fig}(b). We can see that when $F_{\rm total}$ is relatively small, our proposed scheme obtains the exact optimal scheduling variable value to achieve the minimum latency. As $F_{\rm total}$ increases, variables in a certain interval are all optimal, and our proposed scheme always select a certain value in the interval. Similarly, Fig. \ref{eta_opt_fig}(c) and Fig. \ref{eta_opt_fig}(d) present the relationship between the required storage and the scheduling variable. We could also observe that our proposed scheme always obtain the scheduling variable that achieves the minimum required storage. We also notice that this minimized required storage appears to be 0, which means that if the data is properly scheduled, the system does not require extra storage resources for stranded data.

In Fig. \ref{main F total}, we verify the proposed principles for configuring the total computing capability. The total UAV-satellite data rate is set as $R_{\rm total}^S = 0.5$ Mbits/s. As shown in Fig. \ref{main F total}(a), we depict the minimized overall latency obtained with varying user-UAV total bandwidth $B_{\rm total}$ and total computing capability $F_{\rm total}$, as well as the latency obtained with $F_{\rm total}^{\rm lim}$. Four example values of $B_{\rm total}$ are selected to show the specific relationship in Fig. \ref{main F total}(b). We could see that for any given $B_{\rm total}$, when $F_{\rm total}$ is larger than $F_{\rm total}^{\rm lim}$, the overall latency does not decline as $F_{\rm total}$ increases, which verifies our proposition. Besides, we notice from the $B_{\rm total} = 0.3$ MHz scenario that the latency may stop declining before $F_{\rm total}$ reaches $F_{\rm total}^{\rm lim}$, which implies that further optimization of the configured total computing capability is possible. This requires further research into this problem.


Finally, in Fig. \ref{main B_S total} we present the impact of the UAV-satellite total data rate $R^S_{\mathrm{total}}$ on the overall latency with varying user-UAV total transmission bandwidth $B_{\mathrm{total}}$. The latency results are averaged over 50 randomly-selected network topologies. We can observe that as $R^S_{\mathrm{total}}$ increases the overall latency first reduces to a minimum and then remains unchanged. This is because when $B^S_{\mathrm{total}}$ is large enough the other two parameters become the system performance bottleneck. Besides, as $R_{\mathrm{total}}$ increases the overall latency decreases and eventually reaches a minimum value. Similarly, the reason is $B_{\mathrm{total}}$ becomes large enough and is no longer the bottleneck of the system performance.

\begin{figure}[t]
	\centering
	\includegraphics[width=3.5in]{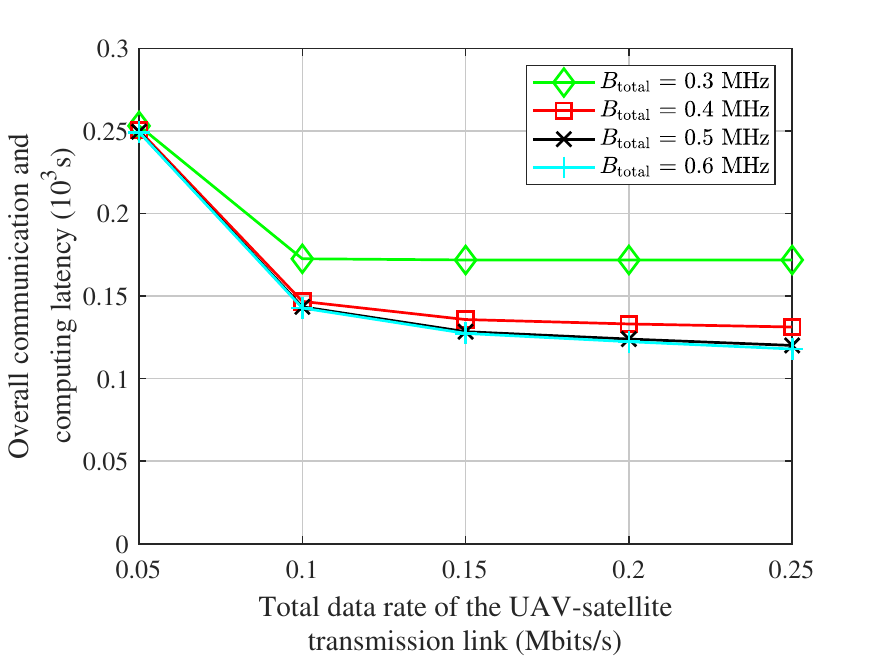}
	\caption{Relationship between the overall latency and the UAV-satellite transmission bandwidth.}
	\label{main B_S total}
\end{figure}

\section{Conclusions}
In this paper, we have modeled the EIH-empowered NTN to support low-latency sensing data upload for quick response to disasters. The EIH incorporates communications, computing and storage capabilities, which is envisioned to synergize heterogeneous coupling parts, and enable systematic design. We have investigated two major design problems of the EIH-empowered NTN to improve the latency performance and resource efficiency. First, the joint data scheduling and resource orchestration problem has been formulated for latency minimization. The problem has been transformed equivalently into a convex one. Thus, an optimal joint data scheduling and resource orchestration scheme has been proposed. On that basis, we have further derived the principles for resource configuration of the EIH, under its payload limitations in terms of size, weight and energy supply. Simulation results have corroborated our theoretical achievements, and have also demonstrated the superiority of our proposed scheme in reducing the overall upload latency, thus enabling quick emergency rescue.

\appendices
\section{Proof of \textit{Proposition 1}}
In order to obtain the expressions of the functions $T_u(B_u,R_u^S,F_u,\eta_u)$ and $V_u(B_u,R_u^S,F_u,\eta_u)$, we take the data flow perspective as shown in Fig. 2. We present five different situations based on the relationship among the data flow rates, and discuss the $T_u$ and $V_u$ expressions of each situation.

\subsection{The amount of to-be-computed data and to-be-uploaded data in the storage both grow during the user-UAV transmission, and the former runs out first afterwards.}
We first focus on the to-be-computed data in the storage unit as shown in Fig. 2. For this part of data, the input data flow is a ratio of the user-UAV transmitted data flow, with the data rate of $\eta_u B_u r_u$. Once the user-UAV transmission is completed, the input data rate becomes zero. The output data flow is the data flow to the computing unit for computation, which has data rate $\frac{F_u}{\rho_u}$. In the first situation, we assume
\begin{equation}
	\eta_u B_u r_u \geq \frac{F_u}{\rho_u}, \label{appd1.a.1}
\end{equation}
which suggests that the amount of to-be-computed data in the storage grows during the user-UAV transmission.

Then we focus on the to-be-uploaded data in the storage unit. The input data flow for this part of data is a ratio of the user-UAV transmitted data flow plus the computed outcome data flow from the computing unit, and the combined data rate is $\frac{\zeta_u F_u}{\rho_u}+(1-\eta_u) B_u r_u$. The input data rate also turns zero after the user-UAV transmission. The output data flow is the UAV-satellite transmitted data flow with the data rate of $R_u^S$. We assume in the first situation that
\begin{equation}
	\frac{\zeta_u F_u}{\rho_u}+(1-\eta_u) B_u r_u \geq R_u^S. \label{appd1.a.2}
\end{equation}
Similarly, this suggests that during the user-UAV transmission the amount of to-be-uploaded data in the storage unit grows.

The duration of user-UAV transmission could be calculated as $\frac{D_u}{B_u r_u}$. By the time the user-UAV transmission is completed, the to-be-computed data in the storage unit have the size of $\frac{D_u}{B_u r_u}(\eta_u B_u r_u - \frac{F_u}{\rho_u})$, and the size of the to-be-uploaded data is $\frac{D_u}{B_u r_u}[\frac{\zeta_u F_u}{\rho_u}+(1-\eta_u) B_u r_u - R_u^S]$. We assume that the former part of data run out first after the user-UAV transmission is completed, which can be expressed as
\begin{align}
	\nonumber & \frac{\frac{D_u}{B_u r_u}(\eta_u B_u r_u - \frac{F_u}{\rho_u})}{\frac{F_u}{\rho_u}} \\
	& \hspace{10mm} \leq \frac{\frac{D_u}{B_u r_u}[\frac{\zeta_u F_u}{\rho_u}+(1-\eta_u) B_u r_u - R_u^S]}{R_u^S - \frac{\zeta_u F_u}{\rho_u}},
\end{align}
which can be further simplified as
\begin{equation}
	\frac{F_u}{\rho_u} \geq \frac{\eta_u R_u^S}{\zeta_u \eta_u + 1 - \eta_u}. \label{appd1.a.3}
\end{equation}

Under these assumptions, the overall communication and computing latency is composed of three parts. The first part is the user-UAV transmission latency where data are accumulated in the storage. The second part is the latency of the to-be-computed data running out, and the third part is the latency of the remaining to-be-uploaded data running out. Therefore, the overall latency $T_u(B_u,R_u^S,F_u,\eta_u)$ can be expressed as
\begin{align}
	\nonumber T_u & = \frac{D_u}{B_u r_u}+\frac{\frac{D_u}{B_u r_u}(\eta_u B_u r_u - \frac{F_u}{\rho_u})}{\frac{F_u}{\rho_u}} \\
	\nonumber & \hspace{5mm} + \frac{1}{R_u^S}\left\{ \frac{D_u}{B_u r_u}\left[\frac{\zeta_u F_u}{\rho_u}+(1-\eta_u) B_u r_u - R_u^S\right] \right. \\
	\nonumber & \left. \hspace{5mm} - (R_u^S - \frac{\zeta_u F_u}{\rho_u})\frac{\frac{D_u}{B_u r_u}(\eta_u B_u r_u - \frac{F_u}{\rho_u})}{\frac{F_u}{\rho_u}} \right\} \\
	& = \frac{D_u}{R_u^S}(\zeta_u \eta_u + 1 - \eta_u). \label{appd1.a.4}
\end{align}
This expression can be understood in a simpler way, namely the UAV-satellite transmission data rate is the bottleneck of the system, and thus the overall latency is the total data to be transmitted from the UAV to the satellite $D_u(\zeta_u \eta_u + 1 - \eta_u)$ divided by the UAV-satellite transmission rate $R_u^S$.

We further consider the minimum storage $V_u(B_u,R_u^S,F_u,\eta_u)$ required. It can be observed that the storage unit accumulates data during the user-UAV transmission, and afterwards the amount of data decreases because of the computation and transmission to the satellite. Therefore, the minimum storage $V_u(B_u,R_u^S,F_u,\eta_u)$ required corresponds to the moment at which the user-UAV transmission is just completed, and the expression is
\begin{equation}
	V_u = \frac{D_u}{B_u r_u}\left[B_u r_u - R_u^S - (1-\zeta_u)\frac{F_u}{\rho_u}\right]. \label{appd1.a.5}
\end{equation}

In this first situation, we prove that under assumptions (\ref{appd1.a.1}), (\ref{appd1.a.2}) and (\ref{appd1.a.3}), the expressions for $T_u(B_u,R_u^S,F_u,\eta_u)$ and $V_u(B_u,R_u^S,F_u,\eta_u)$ are shown in (\ref{appd1.a.4}) and (\ref{appd1.a.5}), which corresponds to (\ref{expr_T}a) and (\ref{expr_V}a).

\subsection{The amount of to-be-computed data and to-be-uploaded data in the storage both grow during the user-UAV transmission, and the latter runs out first afterwards.}
In the second situation, we assume that (\ref{appd1.a.1}) and (\ref{appd1.a.2}) still hold. This means that during the user-UAV transmission, both the to-be-computed data and the to-be-uploaded data are accumulating in the storage unit.

Different from the first situation, we assume that the to-be-uploaded data to the satellite in the storage run out first, which means that the following inequality holds
\begin{equation}
	\frac{F_u}{\rho_u} < \frac{\eta_u R_u^S}{\zeta_u \eta_u + 1 - \eta_u}. \label{appd1.b.1}
\end{equation}

Under the assumptions (\ref{appd1.a.1}), (\ref{appd1.a.2}) and (\ref{appd1.b.1}), the overall communication and computing latency include the latency of user-UAV transmission and the latency of the to-be-computed data running out in the storage. In this case, the expression of $T_u(B_u,R_u^S,F_u,\eta_u)$ is given by
\begin{align}
	\nonumber T_u & = \frac{D_u}{B_u r_u}+\frac{\frac{D_u}{B_u r_u}(\eta_u B_u r_u - \frac{F_u}{\rho_u})}{\frac{F_u}{\rho_u}} \\
	& = \frac{\eta_u D_u \rho_u}{F_u}. \label{appd1.b.2}
\end{align}
This expression shows that the computation capability is the bottleneck of this system, and the overall latency is the total data required to be computed $\eta_u D_u$ divided by the computing unit throughput $F_u/\rho_u$.

Similar to the first situation, the data in the storage unit achieve maximum amount at the moment when the user-UAV transmission is completed, and therefore the expression of the minimum storage required $V_u(B_u,R_u^S,F_u,\eta_u)$ in this situation is the same as (\ref{appd1.a.5}).

We prove that in the second situation, under assumptions (\ref{appd1.a.1}), (\ref{appd1.a.2}) and (\ref{appd1.b.1}), the expressions for $T_u(B_u,R_u^S,F_u,\eta_u)$ and $V_u(B_u,R_u^S,F_u,\eta_u)$ are shown as in (\ref{appd1.b.2}) and (\ref{appd1.a.5}), which corresponds to (\ref{expr_T}b) and (\ref{expr_V}b).

\subsection{Only the amount of to-be-computed data in the storage grow during the user-UAV transmission.}
In the third situation, we assume that inequality (\ref{appd1.a.1}) still holds, which means that the amount of to-be-computed data are increasing during the user-UAV transmission.

Nevertheless, we assume that instead of (\ref{appd1.a.2}), the following inequality holds in this situation
\begin{equation}
	\frac{\zeta_u F_u}{\rho_u}+(1-\eta_u) B_u r_u < R_u^S, \label{appd1.c.1}
\end{equation}
which means that the output data rate is larger than the input data rate for the part of to-be-uploaded data in the storage unit. In this case, the actual UAV-satellite data rate (output data rate) is the same as the partial user-UAV transmitted data rate and the computed outcome data rate combined (input data rate) during the user-UAV transmission. Afterwards the actual UAV-satellite data rate becomes the same as the computed outcome data rate. The amount of to-be-uploaded data to the satellite in the storage unit is always equal to zero in this situation.

Similar to the second situation, the computation rate is the bottleneck of the system since the whole process is finished when the to-be-computed data in the storage run out. The expression of $T_u(B_u,R_u^S,F_u,\eta_u)$ is the same as (\ref{appd1.b.2}).

Maximum data amount in the storage is achieved when the user-UAV transmission is just finished, and the required storage size $T_u(B_u,R_u^S,F_u,\eta_u)$ can be given by
\begin{equation}
	V_u = \frac{D_u}{B_u r_u}\left(\eta_u B_u r_u - \frac{F_u}{\rho_u}\right). \label{appd1.c.2}
\end{equation}
The minimum required storage in this situation is different from the previous two situations because the actual UAV-satellite data rate changes to match the input data rate of the to-be-uploaded data in the storage.

In this subsection, we prove that in the third situation, under assumptions (\ref{appd1.a.1}) and (\ref{appd1.c.1}), the expressions for $T_u(B_u,R_u^S,F_u,\eta_u)$ and $V_u(B_u,R_u^S,F_u,\eta_u)$ are shown as in (\ref{appd1.b.2}) and (\ref{appd1.c.2}), which corresponds to (\ref{expr_T}c) and (\ref{expr_V}c).

\subsection{Only the amount of to-be-uploaded data in the storage grow during the user-UAV transmission.}
We assume in the fourth situation that the following inequality instead of (\ref{appd1.a.1}) holds
\begin{equation}
	\eta_u B_u r_u < \frac{F_u}{\rho_u}, \label{appd1.d.1}
\end{equation}
which means that the output data rate is larger than the input data rate for the to-be-computed data in the storage unit. In this case, the actual data rate of the computing unit obtaining data from the storage (output data rate) is the same as the partial user-UAV transmitted data rate (input data rate) during the user-UAV transmission. The amount of to-be-computed data in the storage always equals to zero.

We also assume that the following inequality holds
\begin{equation}
	(\zeta_u\eta_u+1-\eta_u) B_u r_u \geq R_u^S, \label{appd1.d.2}
\end{equation}
which means the input data rate for the to-be-uploaded data in the storage unit is larger than the output data rate and the data amount is growing during the user-UAV transmission. It should be noted that (\ref{appd1.d.2}) is different from (\ref{appd1.a.2}) because the computing unit throughput changes in this situation.

Because the whole process ends when the to-be-uploaded data in the storage unit run out, the system is limited by the UAV-satellite transmission rate, and the overall communication and computation latency $T_u(B_u,R_u^S,F_u,\eta_u)$ is the same as (\ref{appd1.a.4}).

The data amount in the storage also reaches the maximum value when the user-UAV transmission is just finished, and the minimum storage required $V_u(B_u,R_u^S,F_u,\eta_u)$ can be given by
\begin{equation}
	V_u = \frac{D_u}{B_u r_u}[(\zeta_u\eta_u+1-\eta_u) B_u r_u - R_u^S]. \label{appd1.d.3}
\end{equation}

In the fourth situation, we prove that under assumptions (\ref{appd1.d.1}) and (\ref{appd1.d.2}), the expressions for $T_u(B_u,R_u^S,F_u,\eta_u)$ and $V_u(B_u,R_u^S,F_u,\eta_u)$ are shown as in (\ref{appd1.a.4}) and (\ref{appd1.d.3}), which corresponds to (\ref{expr_T}d) and (\ref{expr_V}d).

\subsection{Neither the amount of to-be-computed data nor the amount of to-be-uploaded data in the storage grow during the user-UAV transmission.}
In the fifth situation, we assume that (\ref{appd1.d.1}) still holds which suggests that the amount of to-be-computed data in the storage unit is always zero.

We also assume that the following inequality instead of (\ref{appd1.d.2}) holds
\begin{equation}
	(\zeta_u\eta_u+1-\eta_u) B_u r_u < R_u^S, \label{appd1.e.1}
\end{equation}
which means the input data rate for the to-be-uploaded data in the storage unit is smaller than the output data rate. In this case the amount of the to-be-uploaded data in the storage also always equals zero.

In this situation, no amount of data are accumulating in the storage. Thus, the whole process is finished when the user-UAV transmission is completed, and the overall communication and computation latency $T_u(B_u,R_u^S,F_u,\eta_u)$ is
\begin{equation}
	T_u = \frac{D_u}{B_u r_u}, \label{appd1.e.2}
\end{equation}
and the minimum required storage $V_u(B_u,R_u^S,F_u,\eta_u)$ is
\begin{equation}
	V_u = 0. \label{appd1.e.3}
\end{equation}

We prove that under assumptions (\ref{appd1.d.1}) and (\ref{appd1.e.1}), the expressions for $T_u(B_u,R_u^S,F_u,\eta_u)$ and $V_u(B_u,R_u^S,F_u,\eta_u)$ are shown as in (\ref{appd1.e.2}) and (\ref{appd1.e.3}), which corresponds to (\ref{expr_T}e) and (\ref{expr_V}e).

\section{Proof of \textit{Theorem 1}}
Assuming that the values of variables $B_u$, $R_u^S$ and $F_u$ are given, we present three different situations based on the relationship between $B_u$ and $R_u^S$. The optimal value of variable $\eta_u^{\mathrm{opt}}$ and the corresponding minimized function values $T_u^{\eta-opt}(B_u,R_u^S,F_u) = T_u(B_u,R_u^S,F_u,\eta_u^{\mathrm{opt}})$ and $V_u^{\eta-\mathrm{opt}}(B_u,R_u^S,F_u) = V_u(B_u,R_u^S,F_u,\eta_u^{\mathrm{opt}})$ are discussed in each situation.

\subsection{$B_u r_u < R_u^S$}
Under this assumption, we further discuss the relationship between $T_u$ and $\eta_u$ and the relationship between $V_u$ and $\eta_u$ given different $F_u$ values.

We first assume
\begin{equation}
	\frac{F_u}{\rho_u} < B_u r_u.
\end{equation}
In this case, when the following inequality holds
\begin{equation}
	0 \leq \eta_u < \frac{F_u}{\rho_u B_u r_u},
\end{equation}
the expressions of $T_u$ and $V_u$ refer to (\ref{expr_T}e) and (\ref{expr_V}e), where $T_u$ and $V_u$ remain unchanged with $\eta_u$ increasing. Besides, when the following inequality holds
\begin{equation}
	\frac{F_u}{\rho_u B_u r_u} \leq \eta_u \leq 1,
\end{equation}
the expressions of $T_u$ and $V_u$ refer to (\ref{expr_T}c) and (\ref{expr_V}c), where both $T_u$ and $V_u$ monotonically increase with $\eta_u$ increasing. Therefore, the optimal $\eta_u$ could be arbitrarily taken in the range of $[ 0, \frac{F_u}{\rho_u B_u r_u} )$.

Then we assume
\begin{equation}
	\frac{F_u}{\rho_u} \geq B_u r_u.
\end{equation}
In this case, for
\begin{equation}
	0 \leq \eta_u \leq 1,
\end{equation}
the expressions of $T_u$ and $V_u$ refer to (\ref{expr_T}e) and (\ref{expr_V}e), where $T_u$ and $V_u$ remain unchanged with $\eta_u$ increasing. Therefore, the optimal $\eta_u$ could be arbitrarily taken in the range of $[ 0, 1 ]$.

Therefore, on the condition of
\begin{equation}
	B_u r_u < R_u^S,
\end{equation}
we set the optimal value of $\eta_u$ to be
\begin{equation}
	\eta_u^{\mathrm{opt}} = 0,
\end{equation}
for the simplicity of expression. The corresponding function values are
\begin{align}
	T_u^{\eta-\mathrm{opt}}(B_u,R_u^S,F_u) & = \frac{D_u}{B_u r_u}, \\
	V_u^{\eta-\mathrm{opt}}(B_u,R_u^S,F_u) & = 0.
\end{align}
This verifies the first row of Table \ref{eta-opt}.

\subsection{$R_u^S \leq B_u r_u < \frac{R_u^S}{\zeta_u}$}
we also discuss the relationship between $T_u$ and $\eta_u$ and the relationship between $V_u$ and $\eta_u$ given different $F_u$ values as follows.

We first assume
\begin{equation}
	\frac{F_u}{\rho_u} < \frac{B_u r_u - R_u^S}{1-\zeta_u}.
\end{equation}
In this case, when the following inequality holds
\begin{equation}
	0 \leq \eta_u < \frac{F_u}{\rho_u B_u r_u},
\end{equation}
the expressions of $T_u$ and $V_u$ refer to (\ref{expr_T}d) and (\ref{expr_V}d), where both $T_u$ and $V_u$ monotonically decrease with $\eta_u$ increasing. When the following inequality holds
\begin{equation}
	\frac{F_u}{\rho_u B_u r_u} \leq \eta_u < \frac{F_u}{\rho_u R_u^S+(1-\zeta_u)F_u},
\end{equation}
the expressions of $T_u$ and $V_u$ refer to (\ref{expr_T}a) and (\ref{expr_V}a), and with $\eta_u$ increasing, $T_u$ monotonically decrease while $V_u$ remain unchanged. When the following inequality holds
\begin{align}
	\nonumber & \frac{F_u}{\rho_u R_u^S+(1-\zeta_u)F_u} \leq \eta_u \\
	& \hspace{30mm} < \frac{\rho_u(B_u r_u - R_u^S)+\zeta_u F_u}{\rho_u B_u r_u},
\end{align}
the expressions of $T_u$ and $V_u$ refer to (\ref{expr_T}b) and (\ref{expr_V}b), and with $\eta_u$ increasing, $T_u$ monotonically increase while $V_u$ remain unchanged. Finally, when the following inequality holds
\begin{equation}
	\frac{\rho_u(B_u r_u - R_u^S)+\zeta_u F_u}{\rho_u B_u r_u} \leq \eta_u \leq 1,
\end{equation}
the expressions of $T_u$ and $V_u$ refer to (\ref{expr_T}c) and (\ref{expr_V}c), where both $T_u$ and $V_u$ monotonically increase with $\eta_u$ increasing. Therefore, on the condition of
\begin{equation}
	R_u^S \leq B_u r_u < \frac{R_u^S}{\zeta_u},\ \frac{F_u}{\rho_u} < \frac{B_u r_u - R_u^S}{1-\zeta_u}
\end{equation}
the optimal value of $\eta_u$ to minimize both functions is determined to be
\begin{equation}
	\eta_u^{\mathrm{opt}} = \frac{F_u}{\rho_u R_u^S+(1-\zeta_u)F_u},
\end{equation}
and the corresponding function values are
\begin{align}
	T_u^{\eta-\mathrm{opt}}(B_u,R_u^S,F_u) & = \frac{\rho_u D_u}{\rho_u R_u^S+(1-\zeta_u)F_u}, \\
	\nonumber V_u^{\eta-\mathrm{opt}}(B_u,R_u^S,F_u) & = \frac{D_u}{B_u r_u}\left[B_u r_u - R_u^S \right.\\
	& \left. \hspace{20mm} - (1-\zeta_u) \frac{F_u}{\rho_u}\right],
\end{align}
This verifies the second row of Table \ref{eta-opt}.

Then we assume
\begin{equation}
	\frac{B_u r_u - R_u^S}{1-\zeta_u} \leq \frac{F_u}{\rho_u} < B_u r_u.
\end{equation}
In this case, when the following inequality holds
\begin{equation}
	0 \leq \eta_u < \frac{B_u r_u - R_u^S}{(1-\zeta_u)B_u r_u},
\end{equation}
the expressions of $T_u$ and $V_u$ refer to (\ref{expr_T}d) and (\ref{expr_V}d), where both $T_u$ and $V_u$ monotonically decrease with $\eta_u$ increasing. When the following inequality holds
\begin{equation}
	\frac{B_u r_u - R_u^S}{(1-\zeta_u)B_u r_u} \leq \eta_u < \frac{F_u}{\rho_u B_u r_u},
\end{equation}
the expressions of $T_u$ and $V_u$ refer to (\ref{expr_T}e) and (\ref{expr_V}e), where $T_u$ and $V_u$ remain unchanged with $\eta_u$ increasing. When the following inequality holds
\begin{equation}
	\frac{F_u}{\rho_u B_u r_u} \leq \eta_u \leq 1,
\end{equation}
the expressions of $T_u$ and $V_u$ refer to (\ref{expr_T}c) and (\ref{expr_V}c), where both $T_u$ and $V_u$ monotonically increase with $\eta_u$ increasing. Therefore, the optimal $\eta_u$ could be arbitrarily taken in the range of $[ \frac{B_u r_u - R_u^S}{(1-\zeta_u)B_u r_u}, \frac{F_u}{\rho_u B_u r_u} )$.

Finally we assume
\begin{equation}
	\frac{F_u}{\rho_u} \geq B_u r_u.
\end{equation}
In this case, when the following inequality holds
\begin{equation}
	0 \leq \eta_u < \frac{B_u r_u - R_u^S}{(1-\zeta_u)B_u r_u},
\end{equation}
the expressions of $T_u$ and $V_u$ refer to (\ref{expr_T}d) and (\ref{expr_V}d), where both $T_u$ and $V_u$ monotonically decrease with $\eta_u$ increasing. When the following inequality
\begin{equation}
	\frac{B_u r_u - R_u^S}{(1-\zeta_u)B_u r_u} \leq \eta_u \leq 1,
\end{equation}
the expressions of $T_u$ and $V_u$ refer to (\ref{expr_T}e) and (\ref{expr_V}e), where $T_u$ and $V_u$ remain unchanged with $\eta_u$ increasing. Therefore, the optimal $\eta_u$ could be arbitrarily taken in the range of $[ \frac{B_u r_u - R_u^S}{(1-\zeta_u)B_u r_u}, 1]$.

Therefore, on the condition of
\begin{equation}
	R_u^S \leq B_u r_u < \frac{R_u^S}{\zeta_u}, \frac{F_u}{\rho_u} \geq \frac{B_u r_u - R_u^S}{1-\zeta_u},
\end{equation}
we set the optimal value of $\eta_u$ to be
\begin{equation}
	\eta_u^{\mathrm{opt}} = \frac{B_u r_u - R_u^S}{(1-\zeta_u)B_u r_u},
\end{equation}
for the simplicity of expression. The corresponding function values are
\begin{align}
	T_u^{\eta-\mathrm{opt}}(B_u,R_u^S,F_u) & = \frac{D_u}{B_u r_u}, \\ V_u^{\eta-\mathrm{opt}}(B_u,R_u^S,F_u) & = 0.
\end{align}
This verifies the third row of Table \ref{eta-opt}.

\subsection{$B_u r_u \geq \frac{R_u^S}{\zeta_u}$}
We first assume
\begin{equation}
	\frac{F_u}{\rho_u} < \frac{R_u^S}{\zeta_u}.
\end{equation}
In this case, when the following inequality holds
\begin{equation}
	0 \leq \eta_u < \frac{F_u}{\rho_u B_u r_u},
\end{equation}
the expressions of $T_u$ and $V_u$ refer to (\ref{expr_T}d) and (\ref{expr_V}d), where both $T_u$ and $V_u$ monotonically decrease with $\eta_u$ increasing. When the following inequality holds
\begin{equation}
	\frac{F_u}{\rho_u B_u r_u} \leq \eta_u < \frac{F_u}{\rho_u R_u^S+(1-\zeta_u)F_u},
\end{equation}
the expressions of $T_u$ and $V_u$ refer to (\ref{expr_T}a) and (\ref{expr_V}a), and with $\eta_u$ increasing, $T_u$ monotonically decrease while $V_u$ remain unchanged. When the following inequality holds
\begin{align}
	\nonumber & \frac{F_u}{\rho_u R_u^S+(1-\zeta_u)F_u} \leq \eta_u \\
	& \hspace{30mm} < \frac{\rho_u(B_u r_u - R_u^S)+\zeta_u F_u}{\rho_u B_u r_u},
\end{align}
the expressions of $T_u$ and $V_u$ refer to (\ref{expr_T}b) and (\ref{expr_V}b). With $\eta_u$ increasing, $T_u$ monotonically increase while $V_u$ remain unchanged. Finally, when the following inequality holds
\begin{equation}
	\frac{\rho_u(B_u r_u - R_u^S)+\zeta_u F_u}{\rho_u B_u r_u} \leq \eta_u \leq 1,
\end{equation}
the expressions of $T_u$ and $V_u$ refer to (\ref{expr_T}c) and (\ref{expr_V}c), where both $T_u$ and $V_u$ monotonically increase with $\eta_u$ increasing.
Therefore, on the condition of
\begin{equation}
	B_u r_u \geq \frac{R_u^S}{\zeta_u},\ \frac{F_u}{\rho_u} < \frac{R_u^S}{\zeta_u},
\end{equation}
the optimal value of $\eta_u$ to minimize both functions is determined to be
\begin{equation}
	\eta_u^{\mathrm{opt}} = \frac{F_u}{\rho_u R_u^S+(1-\zeta_u)F_u},
\end{equation}
and the corresponding function values are 
\begin{align}
	T_u^{\eta-\mathrm{opt}}(B_u,R_u^S,F_u) & = \frac{\rho_u D_u}{\rho_u R_u^S+(1-\zeta_u)F_u}, \\
	\nonumber V_u^{\eta-\mathrm{opt}}(B_u,R_u^S,F_u) & = \frac{D_u}{B_u r_u}\left[B_u r_u - R_u^S \right. \\
	& \left. \hspace{20mm} - (1-\zeta_u) \frac{F_u}{\rho_u}\right].
\end{align}
This verifies the fourth row of Table \ref{eta-opt}.

We then assume
\begin{equation}
	\frac{R_u^S}{\zeta_u} \leq \frac{F_u}{\rho_u} < B_u r_u.
\end{equation}
In this case, when the following inequality holds
\begin{equation}
	0 \leq \eta_u < \frac{F_u}{\rho_u B_u r_u},
\end{equation}
the expressions of $T_u$ and $V_u$ refer to (\ref{expr_T}d) and (\ref{expr_V}d), where both $T_u$ and $V_u$ monotonically decrease with $\eta_u$ increasing. When the following inequality holds
\begin{equation}
	\frac{F_u}{\rho_u B_u r_u} \leq \eta_u \leq 1,
\end{equation}
the expressions of $T_u$ and $V_u$ refer to (\ref{expr_T}a) and (\ref{expr_V}a). With $\eta_u$ increasing, $T_u$ monotonically decrease while $V_u$ remain unchanged. Therefore, on the condition of
\begin{equation}
	B_u r_u \geq \frac{R_u^S}{\zeta_u},\ \frac{R_u^S}{\zeta_u} \leq \frac{F_u}{\rho_u} < B_u r_u,
\end{equation}
the optimal value of $\eta_u$ to minimize both functions is determined to be
\begin{equation}
	\eta_u^{\mathrm{opt}} = 1,
\end{equation}
and the corresponding function values are
\begin{align}
	T_u^{\eta-\mathrm{opt}}(B_u,R_u^S,F_u) & = \frac{\zeta_u D_u}{R_u^S}, \\
	\nonumber V_u^{\eta-\mathrm{opt}}(B_u,R_u^S,F_u) & = \frac{D_u}{B_u r_u}\left[B_u r_u - R_u^S \right. \\
	& \hspace{20mm} \left. - (1-\zeta_u) \frac{F_u}{\rho_u}\right].
\end{align}
This verifies the fifth row of Table \ref{eta-opt}.

Finally we assume
\begin{equation}
	\frac{F_u}{\rho_u} \geq B_u r_u.
\end{equation}
In this case, for
\begin{equation}
	0 \leq \eta_u \leq 1
\end{equation}
the expressions of $T_u$ and $V_u$ refer to (\ref{expr_T}d) and (\ref{expr_V}d), where both $T_u$ and $V_u$ monotonically decrease with $\eta_u$ increasing. Therefore, on the condition of
\begin{equation}
	B_u r_u \geq \frac{R_u^S}{\zeta_u},\ \frac{F_u}{\rho_u} \geq B_u r_u,
\end{equation}
the optimal value of $\eta_u$ to minimize both functions is determined to be
\begin{equation}
	\eta_u^{\mathrm{opt}} = 1,
\end{equation}
and the corresponding function values are
\begin{align}
	T_u^{\eta-\mathrm{opt}}(B_u,R_u^S,F_u) & = \frac{\zeta_u D_u}{R_u^S}, \\
	V_u^{\eta-\mathrm{opt}}(B_u,R_u^S,F_u) & = \frac{D_u}{B_u r_u}(\zeta_u B_u r_u - R_u^S).
\end{align}
This verifies the sixth row of Table \ref{eta-opt}.

\section{Proof of \textit{Theorem 3}}
We refer to the proof of \textit{Theorem 2} and also adopt the method of proof by contradiction. Assume that the theorem does not hold, which is equivalent to the following statement. For problem (\ref{problem 2}) there exists at least one set of optimal variables $(\mathbf{B}_{0}, \mathbf{R}^S_{0}, \mathbf{F}_{0})$ lying outside the feasible region of problem (\ref{problem 3}), and satisfying that
\begin{equation}
	\underset{u}{\max} \  T_u^{\eta-\mathrm{opt}}(B_{0,u},R_{0,u}^S,F_{0,u}) < \underset{u}{\max} \  \frac{D_u}{B_u r_u}
\end{equation}
holds for all feasible solutions $(\mathbf{B}, \mathbf{R}^S, \mathbf{F})$ of problem (\ref{problem 3}). We categorize the elements of $(\mathbf{B}_{0}, \mathbf{R}^S_{0}, \mathbf{F}_{0})$ based on the relationship among $B_{0,u}$, $R^S_{0,u}$ and $F_{0,u}$.

Select all elements from $(\mathbf{B}_{0}, \mathbf{R}^S_{0}, \mathbf{F}_{0})$ that satisfy
\begin{equation}
	B_{0,u} r_u < R_{0,u}^S,
\end{equation}
corresponding to the first row of Table \ref{eta-opt}. For any one of those elements $(B_{0,u}, R^S_{0,u}, F_{0,u})$, we present a corresponding solution $(B_{1,u}, R^S_{1,u}, F_{1,u})$ which satisfies
\begin{subnumcases}{}
	B_{1,u} = B_{0,u} \\
	R^S_{1,u} = B_{0,u} r_u \\
	F_{1,u} = 0
\end{subnumcases}
We can first verify that
\begin{equation}
	(B_{1,u}, R^S_{1,u}, F_{1,u}) \leq (B_{0,u}, R^S_{0,u}, F_{0,u})
\end{equation}
holds. Besides, we can verify that
\begin{equation}
	R_{1,u}^S \leq B_{1,u} r_u \leq \frac{R_{1,u}^S}{\zeta_u}
\end{equation}
and
\begin{equation}
	\frac{F_{1,u}}{\rho_u} = \frac{B_{1,u} r_u - R_{1,u}^S}{1-\zeta_u}
\end{equation}
also hold. Finally, we can verify that
\begin{equation}
	T_u^{\eta-\mathrm{opt}}(B_{0,u},R_{0,u}^S,F_{0,u}) = \frac{D_u}{B_{0,u} r_u} = \frac{D_u}{B_{1,u} r_u}.
\end{equation}

Then select all elements from $(\mathbf{B}_{0}, \mathbf{R}^S_{0}, \mathbf{F}_{0})$ that satisfy
\begin{align}
	\nonumber & R_{0,u}^S \leq B_{0,u} r_u < \frac{R_{0,u}^S}{\zeta_u}, \\
	& \hspace{30mm} \frac{F_{0,u}}{\rho_u} < \frac{B_{0,u} r_u - R_{0,u}^S}{1-\zeta_u},
\end{align}
or
\begin{equation}
	B_{0,u} r_u \geq \frac{R_{0,u}^S}{\zeta_u},\  \frac{F_{0,u}}{\rho_u} < \frac{R_{0,u}^S}{\zeta_u},
\end{equation}
corresponding to the second and fourth row of Table \ref{eta-opt}. For any one of those elements $(B_{0,u}, R^S_{0,u}, F_{0,u})$, we present a corresponding solution $(B_{1,u}, R^S_{1,u}, F_{1,u})$ which satisfies
\begin{subnumcases}{}
	B_{1,u} = \frac{1}{r_u} \left(R_{0,u}^S + \frac{(1-\zeta_u)F_{0,u}}{\rho_u} \right) \\
	R^S_{1,u} = R^S_{0,u} \\
	F_{1,u} = F_{0,u}
\end{subnumcases}
We can first verify that
\begin{equation}
	(B_{1,u}, R^S_{1,u}, F_{1,u}) \leq (B_{0,u}, R^S_{0,u}, F_{0,u})
\end{equation}
holds. Besides, we can verify that
\begin{equation}
	R_{1,u}^S \leq B_{1,u} r_u \leq \frac{R_{1,u}^S}{\zeta_u}
\end{equation}
and
\begin{equation}
	\frac{F_{1,u}}{\rho_u} = \frac{B_{1,u} r_u - R_{1,u}^S}{1-\zeta_u}
\end{equation}
also hold. Finally, we can verify that
\begin{align}
	\nonumber T_u^{\eta-\mathrm{opt}} & (B_{0,u},R_{0,u}^S,F_{0,u}) = \frac{D_u \rho_u}{F_{0,u}(1-\zeta_u)+\rho_u R_{0,u}^S} \\
	& = \frac{D_u \rho_u}{F_{1,u}(1-\zeta_u)+\rho_u R_{1,u}^S} = \frac{D_u}{B_{1,u} r_u}.
\end{align}

Next select all elements from $(\mathbf{B}_{0}, \mathbf{R}^S_{0}, \mathbf{F}_{0})$ that satisfy
\begin{align}
	\nonumber & R_{0,u}^S \leq B_{0,u} r_u < \frac{R_{0,u}^S}{\zeta_u}, \\
	& \hspace{30mm} \frac{F_{0,u}}{\rho_u} > \frac{B_{0,u} r_u - R_{0,u}^S}{1-\zeta_u},
\end{align}
corresponding to the third row of Table \ref{eta-opt}. For any one of those elements $(B_{0,u}, R^S_{0,u}, F_{0,u})$, we present a corresponding solution $(B_{1,u}, R^S_{1,u}, F_{1,u})$ which satisfies
\begin{subnumcases}{}
	B_{1,u} = B_{0,u} \\
	R^S_{1,u} = R^S_{0,u} \\
	F_{1,u} = \rho_u \frac{B_{0,u} r_u - R^S_{0,u}}{1-\zeta_u}
\end{subnumcases}
We can first verify that
\begin{equation}
	(B_{1,u}, R^S_{1,u}, F_{1,u}) \leq (B_{0,u}, R^S_{0,u}, F_{0,u})
\end{equation}
holds. Besides, we can verify that
\begin{equation}
	R_{1,u}^S \leq B_{1,u} r_u \leq \frac{R_{1,u}^S}{\zeta_u}
\end{equation}
and
\begin{equation}
	\frac{F_{1,u}}{\rho_u} = \frac{B_{1,u} r_u - R_{1,u}^S}{1-\zeta_u}
\end{equation}
also hold. Finally, we can verify that
\begin{equation}
	T_u^{\eta-\mathrm{opt}}(B_{0,u},R_{0,u}^S,F_{0,u}) = \frac{D_u}{B_{0,u} r_u} = \frac{D_u}{B_{1,u} r_u}.
\end{equation}

Finally select all elements from $(\mathbf{B}_{0}, \mathbf{R}^S_{0}, \mathbf{F}_{0})$ that satisfy
\begin{equation}
	B_{0,u} r_u \geq \frac{R_{0,u}^S}{\zeta_u},\  \frac{F_{0,u}}{\rho_u} \geq \frac{R_{0,u}^S}{\zeta_u},
\end{equation}
corresponding to the fifth and sixth row of Table \ref{eta-opt}. For any one of those elements $(B_{0,u}, R^S_{0,u}, F_{0,u})$, we present a corresponding solution $(B_{1,u}, R^S_{1,u}, F_{1,u})$ which satisfies
\begin{subnumcases}{}
	B_{1,u} = \frac{R^S_{0,u}}{\zeta_u r_u}\\
	R^S_{1,u} = R^S_{0,u} \\
	F_{1,u} = \rho_u \frac{R^S_{0,u}}{\zeta_u}
\end{subnumcases}
We can first verify that
\begin{equation}
	(B_{1,u}, R^S_{1,u}, F_{1,u}) \leq (B_{0,u}, R^S_{0,u}, F_{0,u})
\end{equation}
holds. Besides, we can verify that
\begin{equation}
	R_{1,u}^S \leq B_{1,u} r_u \leq \frac{R_{1,u}^S}{\zeta_u}
\end{equation}
and
\begin{equation}
	\frac{F_{1,u}}{\rho_u} = \frac{B_{1,u} r_u - R_{1,u}^S}{1-\zeta_u}
\end{equation}
also hold. Finally, we can verify that
\begin{align}
	\nonumber T_u^{\eta-\mathrm{opt}}(B_{0,u},R_{0,u}^S, & F_{0,u}) = \frac{\zeta_u D_u}{R_{0,u}^S} \\
	& = \frac{\zeta_u D_u}{R_{1,u}^S} = \frac{D_u}{B_{1,u} r_u}.
\end{align}

Therefore, we could obtain a solution $(\mathbf{B}_{1}, \mathbf{R}^S_{1}, \mathbf{F}_{1})$. From the above analysis we know that
\begin{equation}
	(\mathbf{B}_{1}, \mathbf{R}^S_{1}, \mathbf{F}_{1}) \leq (\mathbf{B}_{0}, \mathbf{R}^S_{0}, \mathbf{F}_{0}),
\end{equation} 
which means that the solution $(\mathbf{B}_{1}, \mathbf{R}^S_{1}, \mathbf{F}_{1})$ satisfy constraints (\ref{problem 3b})-(\ref{problem 3d}) since $(\mathbf{B}_{0}, \mathbf{R}^S_{0}, \mathbf{F}_{0})$ satisfy (\ref{problem 2c})-(\ref{problem 2e}). Besides, from the above analysis, the solution $(\mathbf{B}_{1}, \mathbf{R}^S_{1}, \mathbf{F}_{1})$ also satisfy constraints (\ref{problem 3e}) and (\ref{problem 3f}), and it could be easily verified that $(\mathbf{B}_{1}, \mathbf{R}^S_{1}, \mathbf{F}_{1})$ also satisfy (\ref{problem 3g})-(\ref{problem 3i}). Therefore, $(\mathbf{B}_{1}, \mathbf{R}^S_{1}, \mathbf{F}_{1})$ is a feasible solution of problem (\ref{problem 3}).

From the above analysis, we also know that
\begin{equation}
	\underset{u}{\max} \  T_u^{\eta-\mathrm{opt}}(B_{0,u},R_{0,u}^S,F_{0,u}) = \underset{u}{\max} \  \frac{D_u}{B_{1,u} r_u}
\end{equation}
which causes contradiction to the original assumption. Thus, the assumption does not hold and the theorem is proved.

\section{Proof of \textit{Theorem 5}}
To prove the theorem, we first consider a new optimization problem
\begin{subequations} \label{problem 5}
	\begin{align}
		\min_{\mathbf{B}, \mathbf{R}^S, T} \  & T \label{problem 5a} \\
		\mathrm{s.t.} \ \ & T \geq \frac{D_u}{B_u r_u}, \  \forall u = 1,...,U, \label{problem 5b} \\
		& \sum_{u=1}^{U} B_u \leq B_{\mathrm{total}}, \label{problem 5c} \\
		& \sum_{u=1}^{U} R_u^S \leq R^S_{\mathrm{total}}, \label{problem 5d} \\
		& R_u^S \leq B_u r_u \leq \frac{R_u^S}{\zeta_u}, \  \forall u = 1,...,U, \label{problem 5e} \\
		& T \geq 0, \label{problem 5h} \\
		& B_u \geq 0, \  \forall u = 1,...,U, \label{problem 5f} \\
		& R_u^S \geq 0, \  \forall u = 1,...,U. \label{problem 5g}
	\end{align}
\end{subequations}
We can see that problem (\ref{problem 5}) is obtained by removing the constraint (\ref{problem 4e}) from problem (\ref{problem 4}). For this problem, there is at least one set of optimal variables $(\mathbf{B}_0, \mathbf{R}^S_0, T_0)$ that satisfies
\begin{equation}
	T_0 = \frac{D_u}{B_{0,u} r_u}, \forall u = 1,...,U. \label{appd4.1}
\end{equation}
The reason is that if there exists a set of optimal variables $(\mathbf{B}_1, \mathbf{R}^S_1, T_1)$ that does not satisfy this condition, we could generate a new set of variable values as follows
\begin{subnumcases}{}
	B_{0,u} = \frac{D_u}{T_{1} r_u}, \forall u = 1,...,U, \\
	R^S_{0,u} = \frac{R^S_{1,u}}{B_{1,u}} \cdot \frac{D_u}{T_{1} r_u}, \forall u = 1,...,U, \\
	T_{0} = T_{1}.
\end{subnumcases}
It can be easily verified that this solution satisfies (\ref{appd4.1}). Since $T_{0} = T_{1}$, this set of variable values is also optimal. Therefore, the following optimization problem has the same optimal objective function value as problem (\ref{problem 5}).
\begin{subequations} \label{problem 6}
	\begin{align}
		\min_{\mathbf{R}^S, T} \  & T \label{problem 6a} \\
		\mathrm{s.t.} \ \ & \sum_{u=1}^{U} \frac{D_u}{r_u T} \leq B_{\mathrm{total}}, \label{problem 6b} \\
		& \sum_{u=1}^{U} R_u^S \leq R^S_{\mathrm{total}}, \label{problem 6c} \\
		& \zeta_u \frac{D_u}{T} \leq R_u^S \leq \frac{D_u}{T} , \  \forall u = 1,...,U, \label{problem 6d} \\
		& T \geq 0, \label{problem 6e} \\
		& R_u^S \geq 0, \  \forall u = 1,...,U. \label{problem 6f}
	\end{align}
\end{subequations}
We could introduce a set of medium variables $\bm{\omega} = [\omega_1,...,\omega_u,...,\omega_U]$ that satisfies
\begin{equation}
	R_u^S = \omega_u \frac{D_u}{T},
\end{equation}
and problem (\ref{problem 6}) can be equivalently transformed into the following optimization problem
\begin{subequations} \label{problem 7}
	\begin{align}
		\min_{\bm{\omega}, T} \  & T \label{problem 7a} \\
		\mathrm{s.t.} \ \ & T \geq \frac{1}{B_{\mathrm{total}}} \sum_{u=1}^{U} \frac{D_u}{r_u}, \label{problem 7b} \\
		& T \geq \frac{1}{R^S_{\mathrm{total}}} \sum_{u=1}^{U} \omega_u D_u, \label{problem 7c} \\
		& T \geq 0, \label{problem 7d} \\
		& \zeta_u \leq \omega_u \leq 1, \  \forall u = 1,...,U. \label{problem 7e}
	\end{align}
\end{subequations}
We could easily obtain the optimal solution to this problem, which could be given by
\begin{equation}
	\omega_u = \zeta_u, \  \forall u = 1,...,U,
\end{equation}
\begin{equation}
	T = \max \left\{ \frac{\sum_{u=1}^{U} \frac{D_u}{r_u}}{B_{\mathrm{total}}} , \frac{\sum_{u=1}^{U} \omega_u D_u}{R^S_{\mathrm{total}}}\right\}.
\end{equation}
Therefore, by substituting the medium variables $\bm{\omega}$, we can obtain a set of optimal variables to problem (\ref{problem 5}) written as follows
\begin{align}
	\nonumber \overline{B_{u}} & = \frac{D_u}{r_u} \min \left\{ \frac{B_{\mathrm{total}}}{\sum_{u=1}^{U} \frac{D_u}{r_u}} , \frac{R^S_{\mathrm{total}}}{\sum_{u=1}^{U} \zeta_u D_u} \right\}, \\
	& \hspace{47mm} \forall u = 1,...,U, \\
	\nonumber \overline{R^S_{u}} & = \zeta_u D_u \min \left\{ \frac{B_{\mathrm{total}}}{\sum_{u=1}^{U} \frac{D_u}{r_u}} , \frac{R^S_{\mathrm{total}}}{\sum_{u=1}^{U} \zeta_u D_u} \right\}, \\
	& \hspace{47mm} \forall u = 1,...,U, \\
	\overline{T} & = \max \left\{ \frac{\sum_{u=1}^{U} \frac{D_u}{r_u}}{B_{\mathrm{total}}} , \frac{\sum_{u=1}^{U} \zeta_u D_u}{R^S_{\mathrm{total}}} \right\}.
\end{align}
Moreover, since that the inequality (\ref{F cond}) holds, we can verify that the above solution also ensures constraint (\ref{problem 4e}) always holds. Since problem (\ref{problem 5}) is obtained by removing the constraint (\ref{problem 4e}) from problem (\ref{problem 4}), the above solution is also an optimal solution for problem (\ref{problem 4}), which proves the theorem.


\begin{thebibliography}{1}
	\bibliographystyle{IEEEtran}
	
	
	
	
	\bibitem{Intro 10}
	M. Erdelj, E. Natalizio, K. R. Chowdhury, and I. F. Akyildiz, ``Help from the sky: Leveraging UAVs for disaster management," \textit{IEEE Pervasive Comput.}, vol. 16, no. 1, pp. 24-32, Jan.-Mar. 2017.
	
	\bibitem{Intro 11}
	I. Benkhelifa, N. Nouali-Taboudjemat, and S. Moussaoui, ``Disaster management projects using wireless sensor networks: An overview," in \textit{Proc. Int. Conf. Adv. Inf. Networking Appl. Workshops (WAINA)}, Victoria, BC, Canada, 2014, pp. 605-610.
	
	\bibitem{Intro 03}
	W. Feng, Y. Lin, Y. Wang, \textit{et al.}, ``Radio map-based cognitive satellite-UAV networks towards 6G on-demand coverage," \textit{IEEE Trans. Cogn. Commun. Netw.}, early access, 2023.
	
	\bibitem{Intro 04}
	F. Rinaldi, H. L. Maattanen, J. Torsner, \textit{et al.}, ``Non-terrestrial networks in 5G \& beyond: A survey," \textit{IEEE Access}, vol. 8, pp. 165178-165200, 2020.
	
	\bibitem{Intro 06}
	W. Feng, Y. Wang, Y. Chen, N. Ge, and C.-X. Wang, ``Structured satellite-UAV-terrestrial networks for 6G Internet of Things," \textit{IEEE Netw.}, early access, 2024.
	
	\bibitem{Intro 05}
	O. Kodheli, E. Lagunas, N, Maturo, \textit{et al.}, ``Satellite communications in the new space era: A survey and future challenges," \textit{IEEE Commun. Surveys Tuts.}, vol. 23, no. 1, pp. 70-109, Firstquart. 2021.
	
	
	\bibitem{Intro 07}
	N. Cheng, F. Lyu, W. Quan, \textit{et al.}, ``Space/aerial-assisted computing offloading for IoT applications: A learning-based approach," \textit{IEEE J. Sel. Areas Commun.}, vol. 37, no. 5, pp. 1117-1129, May 2019.
	
	\bibitem{RW-terr 01}
	GPRS; Overall Description of the GPRS Radio Interface; Stage 2 (Release 15); Revision 15.3.0, document TSG RAN WG3Meeting \#43.064, 3GPP, Gothenburg, Sweden, Mar. 2020.
	
	\bibitem{RW-terr 02}
	E-UTRA and E-UTRAN; Overall Description; Stage 2; Revision 16.1.0, document TSG RAN WG3Meeting \#36.300, 3GPP, Gothenburg, Sweden, Apr. 2020.
	
	\bibitem{RW-terr 03}
	U. Raza, P. Kulkarni, and M. Sooriyabandara, ``Low Power Wide Area Networks: an overview,” \textit{IEEE Commun. Surveys Tuts.}, vol. 19, no. 2, pp. 855–873, Secondquart. 2017.
	
	\bibitem{RW-terr 04}
	M. Centenaro, L. Vangelista, A. Zanella, and M. Zorzi, ``Long-range communications in unlicensed bands: the rising stars in the IoT and smart city scenarios,” \textit{IEEE Wireless Commun.}, vol. 23, no. 5, pp. 60–67, Oct. 2016.
	
	\bibitem{RW-sat 01}
	M. De Sanctis, E. Cianca, G. Araniti, I. Bisio, and R. Prasad, ``Satellite communications supporting Internet of Remote Things,” \textit{IEEE Internet Things J.}, vol. 3, no. 1, pp. 113–123, Feb. 2016.
	
	\bibitem{RW-sat 02}
	M. Centenaro, C. E. Costa, F. Granelli, C. Sacchi, and L. Vangelista, ``A survey on technologies, standards and open challenges in satellite IoT," \textit{IEEE Commun. Surveys Tuts.}, vol. 23, no. 3, pp. 1693-1720, Thirdquart. 2021.
	
	\bibitem{RW-sat 03}
	X. Fang, W. Feng, T. Wei, Y. Chen, N. Ge, and C.-X. Wang, ``5G embraces satellites for 6G ubiquitous IoT: Basic models for integrated satellite terrestrial networks," \textit{IEEE Internet Things J.}, vol. 8, no. 18, pp. 14399-14417, Sep. 2021.
	
	
	
	
	\bibitem{RW-satUAV 01}
	M. Bacco, L. Boero, P. Cassara, \textit{et al.}, ``IoT applications and services in space information networks," \textit{IEEE Wireless Commun.}, vol. 26, no. 2, pp. 31-37, Apr. 2019.
	
	\bibitem{RW-satUAV 02}
	Y. Zhu, W. Bai, M. Sheng, J. Li, D. Zhou, and Z. Han, ``Joint UAV access and GEO satellite backhaul in IoRT networks: Performance analysis and optimization," \textit{IEEE Internet Things J.}, vol. 8, no. 9, pp. 7126-7139, May 2021.
	
	\bibitem{RW-satUAV 03}
	T. Ma, H. Zhou, B. Qian, \textit{et al.}, ``UAV-LEO integrated backbone: A ubiquitous data collection approach for B5G Internet of Remote Things networks," \textit{IEEE J. Sel. Areas Commun.}, vol. 39, no. 11, pp. 3491-3505, Nov. 2021.
	
	\bibitem{RW-satUAV 04}
	C. Liu, W. Feng, Y. Chen, C.-X. Wang, and N. Ge, ``Cell-free satellite-UAV networks for 6G wide-area Internet of Things," \textit{IEEE J. Sel. Areas Commun.}, vol. 39, no. 4, pp. 1116-1131, Apr. 2021.
	
	\bibitem{RW-satUAV 05}
	Y. Wang, Z. Li, Y. Chen, \textit{et al.}, ``Joint resource allocation and UAV trajectory optimization for space–air–ground Internet of Remote Things networks," \textit{IEEE Syst. J.}, vol. 15, no. 4, pp. 4745-4755, Dec. 2021.
	
	\bibitem{RW-MEC 01}
	Y. Lin, W. Feng, T. Zhou, \textit{et al.}, ``Integrating satellites and mobile edge computing for 6G wide-area edge intelligence: Minimal structures and systematic thinking," \textit{IEEE Netw.}, vol. 37, no. 2, pp. 14-21, Mar./Apr. 2023.
	
	\bibitem{RW-MEC 02}
	D. Kim and S. Jeong, ``Joint optimization of offloading scheduling and path planning for space-air-ground integrated edge computing systems," in \textit{Proc. Int. Conf. ICT Convergence (ICTC)}, Jeju Island, Korea, Republic of, 2022, pp. 230-232.
	
	\bibitem{RW-MEC 03}
	N. N. Ei, J. S. Yoon, and C. S. Hong, ``Energy-aware task offloading and resource allocation in space-aerial-integrated MEC system," in \textit{Proc. Asia-Pacific Netw. Oper. Manag. Symp. (APNOMS)}, Takamatsu, Japan, 2022, pp. 1-6.
	
	\bibitem{RW-MEC 04}
	B. Chen, N. Li, Y. Li, X. Tao, and G. Sun, ``Energy efficient hybrid offloading in space-air-ground integrated networks," in \textit{Proc. IEEE Wireless Commun. Networking Conf. (WCNC)}, Austin, TX, USA, 2022, pp. 1319-1324.
	
	\bibitem{RW-MEC 05}
	Y. -H. Chao, C. -H. Chung, C. -H. Hsu, Y. Chiang, H. -Y. Wei, and C. -T. Chou, ``Satellite-UAV-MEC collaborative architecture for task offloading in vehicular networks," in \textit{Proc. IEEE Globecom Workshops (GC Wkshps)}, Taipei, Taiwan, 2020, pp. 1-6.
	
	\bibitem{RW-MEC 06}
	N. Waqar, S. A. Hassan, A. Mahmood, K. Dev, D. -T. Do, and M. Gidlund, ``Computation offloading and resource allocation in MEC-enabled integrated aerial-terrestrial vehicular networks: A reinforcement learning approach," \textit{IEEE Trans. Intell. Transp. Syst.}, vol. 23, no. 11, pp. 21478-21491, Nov. 2022.
	
	\bibitem{RW-MEC 07}
	C. Ding, J. -B. Wang, H. Zhang, M. Lin, and G. Y. Li, ``Joint optimization of transmission and computation resources for satellite and high altitude platform assisted edge computing," \textit{IEEE Trans. Wireless Commun.}, vol. 21, no. 2, pp. 1362-1377, Feb. 2022.
	
	\bibitem{RW-MEC 08}
	Z. Hu, F. Zeng, Z. Xiao, \textit{et al.}, ``Joint resources allocation and 3D trajectory optimization for UAV-enabled space-air-ground integrated networks," \textit{IEEE Trans. Veh. Technol}, vol. 72, no. 11, pp. 14214-14229, Nov. 2023.
	
	\bibitem{RW-MEC 09}
	F. Chai, Q. Zhang, H. Yao, X. Xin, R. Gao, and M. Guizani, ``Joint multi-task offloading and resource allocation for mobile edge computing systems in satellite IoT," \textit{IEEE Trans. Veh. Technol}, vol. 72, no. 6, pp. 7783-7795, Jun. 2023.
	
	\bibitem{RW-MEC 10}
	C. Liu, W. Feng, X. Tao, and N. Ge, ``MEC-empowered non-terrestrial network for 6G wide-area time-sensitive Internet of Things,” Engineering, vol. 8, Jan. 2022, pp. 96–107.
	
	\bibitem{syst 01}
	A. Al-Hourani, S. Kandeepan, and S. Lardner, “Optimal LAP altitude for maximum coverage,” \textit{IEEE Wireless Commun. Lett.}, vol. 3, no. 6, pp. 569–572, Dec. 2014.
	
	\bibitem{sim 01}
	B. Wang, X. Li, D. Huang, and J. Xie, ``A profit maximization strategy of MEC resource provider in the satellite-terrestrial double edge computing system," in \textit{Proc. IEEE 21st International Conference on Communication Technology (ICCT)}, Tianjin, China, 2021, pp. 906-912. 
	
	\bibitem{sim 02}
	J. Sharma, T. Choudhury, S. C. Satapathy, and A. S. Sabitha, ``Study on H.265/HEVC against VP9 and H.264: on space and time complexity for codecs,” in \textit{Proc. International Conference on Communication, Computing and Internet of Things (IC3IoT)}, Chennai, India, 2018, pp. 106–110.
	
	\bibitem{sim 03}
	Y. Chen, W. Feng, and G. Zheng, ``Optimum placement of UAV as relays," \textit{IEEE Commun. Lett.}, vol. 22, no. 2, pp. 248-251, Feb. 2018.
	
\end{thebibliography}
\end{document}